\def\tsc#1{\csdef{#1}{\textsc{\lowercase{#1}}\xspace}}
\newproof{proof}{Proof}
\newtheorem{definition}{Definition}
\newtheorem{example}{Example}
\newtheorem{property}{Property}
\newtheorem{corollary}{Corollary}
\begin{document}
\let\WriteBookmarks\relax
\def\floatpagepagefraction{1}
\def\textpagefraction{.001}

% Short title
\shorttitle{Publishing Microdata through Mutual Cover}

% Short author
\shortauthors{Li et~al.}

% Main title of the paper
\title [mode = title]{MuCo: Publishing Microdata with Privacy Preservation through Mutual Cover}

\author[1]{Boyu Li}[]
\ead{afterslby@outlook.com}
\credit{Conceptualization of this study, Methodology, Software}

\affiliation[1]{organization={School of Information Communication, National University of Defense Technology},
	%    addressline={Radarweg 29}, 
	city={Wuhan},
	% citysep={}, % Uncomment if no comma needed between city and postcode
	postcode={430000}, 
	tate={Hubei},
	country={China}}

\author[1]{Jianfeng Ma}[]
\ead{majianfeng@189.cn}
\credit{Methodology, Data curation, Software}

% Second author
\author[1]{Junhua Xi}[]
\ead{hjh17@nudt.edu.cn}
\credit{Methodology, Data curation, Writing - Original draft preparation}

\author[1]{Lili Zhang}
\ead{l1lzhang@126.com}
\credit{Methodology}

\author[1]{Tao Xie}
\ead{xietao09@nudt.edu.cn}
\credit{Methodology}

\author[1]{Tongfei Shang}
\cormark[1]
\credit{Conceptualization of this study, Methodology}
\ead{340445698@qq.com}

% Corresponding author text
\cortext[cor1]{Corresponding author}

\begin{abstract}
We study the anonymization technique of $k$-anonymity family for preserving privacy in the publication of microdata. Although existing approaches based on generalization can provide good enough protections, the generalized table always suffers from considerable information loss, mainly because the distributions of QI (Quasi-Identifier) values are barely preserved and the results of query statements are groups rather than specific tuples. To this end, we propose a novel technique, called the Mutual Cover (MuCo), to prevent the adversary from matching the combination of QI values in published microdata. The rationale is to replace some original QI values with random values according to random output tables, making similar tuples to cover for each other with the minimum cost. As a result, MuCo can prevent both identity disclosure and attribute disclosure while retaining the information utility more effectively than generalization. The effectiveness of MuCo is verified with extensive experiments.
\end{abstract}

\begin{highlights}
\item We propose MuCo that supports publishing microdata with privacy preservation.
\item MuCo preserves more information utility than generalization while achieving great protection performance.
\item The anonymization process of MuCo is hidden for the adversary.
\item MuCo provides impressive privacy protection, little information loss, and accurate query answering.
\end{highlights}

\begin{keywords}
Privacy-preserving data publishing \sep Mutual cover \sep Random output table \sep $\delta$-Probability
\end{keywords}

\maketitle

\section{Introduction}
In recent years, the massive digital information of individuals has been collected by numerous organizations. The data holders, also known as curators, use the data for data mining tasks, meanwhile they also exchange or publish microdata for further comprehensive research. However, the publication of microdata poses critical threats to the privacy of individuals \cite{agrawal2000privacy, agrawal2005a, DENHAM2020113380, liu2015result, LI2023393}. Consequently, many anonymization techniques are proposed to defend against various privacy disclosures while maintaining the utility of individual information as much as possible \cite{fung2010privacy, zakerzadeh2014towards, MEHTA20221423}.

Typically, the attributes in microdata can be divided into three categories: (1) Explicit-Identifier (EI, also known as Personally-Identifiable Information), such as name and social security number, which can uniquely or mostly identify the record owner; (2) Quasi-Identifier (QI), such as age, gender and zip code, which can be used to re-identify the record owner when taken together; and (3) Sensitive Attribute (SA), such as salary and disease, which contains the confidential information of individuals. According to the work of Sweeney \cite{sweeney2002k}, even with all EI attributes being removed, the record owners can still be re-identified by matching the combination of QI values.

Generalization \cite{fung2010privacy, MORTAZAVI2020113454} is one of the most widely used privacy-preserving techniques. It transforms the values on QI attributes into general forms, and the tuples with equally generalized values constitute an equivalence group. In this way, records in the same equivalence group are indistinguishable. $k$-Anonymity \cite{sweeney2002k, QIAN2024122343} ensures that the probability of identity disclosure is at most $1/k$. For instance, Figure \ref{fig_1b} is a generalized table of Figure \ref{fig_1a} that complies with 2-anonymity, and the adversary has to acquire at least two different tuples by matching the age value of any person.

\begin{figure}[t]
	\vspace{0.1in}
	\centering
	\subfigure[Original table]{
		\begin{minipage}[t]{0.30\linewidth}
			\centering
			\includegraphics[width=1.in]{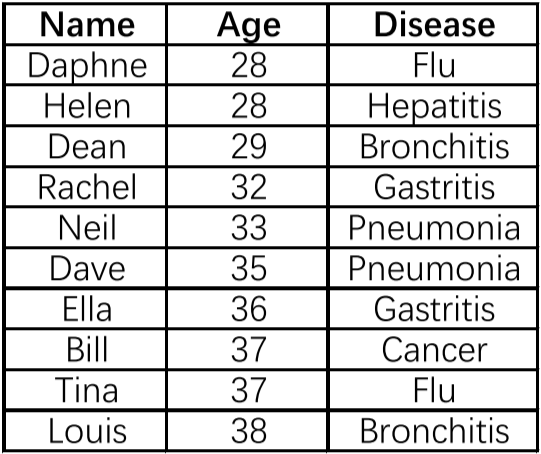}
			\vspace{1pt}
			\label{fig_1a}
	\end{minipage}}
	\hspace{1pt}
	\subfigure[2-anonymity]{
		\begin{minipage}[t]{0.30\linewidth}
			\centering
			\includegraphics[width=1.in]{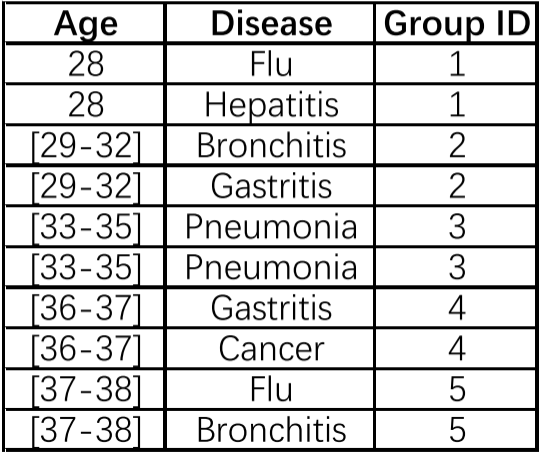}
			\label{fig_1b}
	\end{minipage}}
	\hspace{0.3pt}
	\subfigure[5-diversity]{
		\begin{minipage}[t]{0.3\linewidth}
			\centering
			\includegraphics[width=1.in]{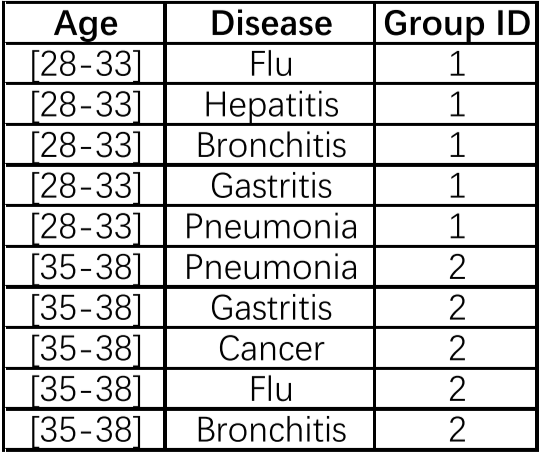}
			\label{fig_1c}
	\end{minipage}}
	\caption{An example of generalization.}
	\label{fig_1}
\end{figure}

\subsection{Motivation}
Although the generalization for $k$-anonymity provides enough protection for identities, it is vulnerable to the attribute disclosure \cite{machanavajjhala2006l}. For instance, in Figure \ref{fig_1b}, the sensitive values in the third equivalence group are both ``pneumonia''. Therefore, an adversary can infer the disease value of Dave by matching his age without re-identifying his exact record. To prevent such disclosure, many effective principles have been proposed, such as $l$-diversity \cite{machanavajjhala2006l} and $t$-closeness \cite{li2007t}. For example, Figure \ref{fig_1c} is the generalized version of Figure \ref{fig_1a} complying with 5-diversity, such that the proportion of each sensitive value inside the equivalence group is no more than $1/5$. Thus, for any individual, the adversary has to obtain at least five different sensitive values by matching the age value.

However, despite protecting against both identity disclosure and attribute disclosure, the information loss of generalized table cannot be ignored. On the one hand, the generalized values are determined by only the maximum and the minimum QI values in equivalence groups, causing that the equivalence groups only preserve the ranges of QI values and the number of records. Consequently, the distributions of QI values are hardly maintained and the information utility is reduced significantly. For instance, as shown in Figure \ref{fig_2}, the red polyline and the magenta polyline represent the distributions on age in Figure \ref{fig_1a} and Figure \ref{fig_1c}, respectively. We can observe that the original distribution is barely preserved in the generalized table. On the other hand, the partition of equivalence groups also increases the information loss of anonymized table because the results of query statements are always the matching equivalence groups rather than the specific matching tuples. For example, if we want to select the tuples whose age values are more than 30 in Figure \ref{fig_1c}, both equivalence groups are considered as the results. 

\begin{figure}[!t]
	\centering
	\includegraphics[height=1.5in]{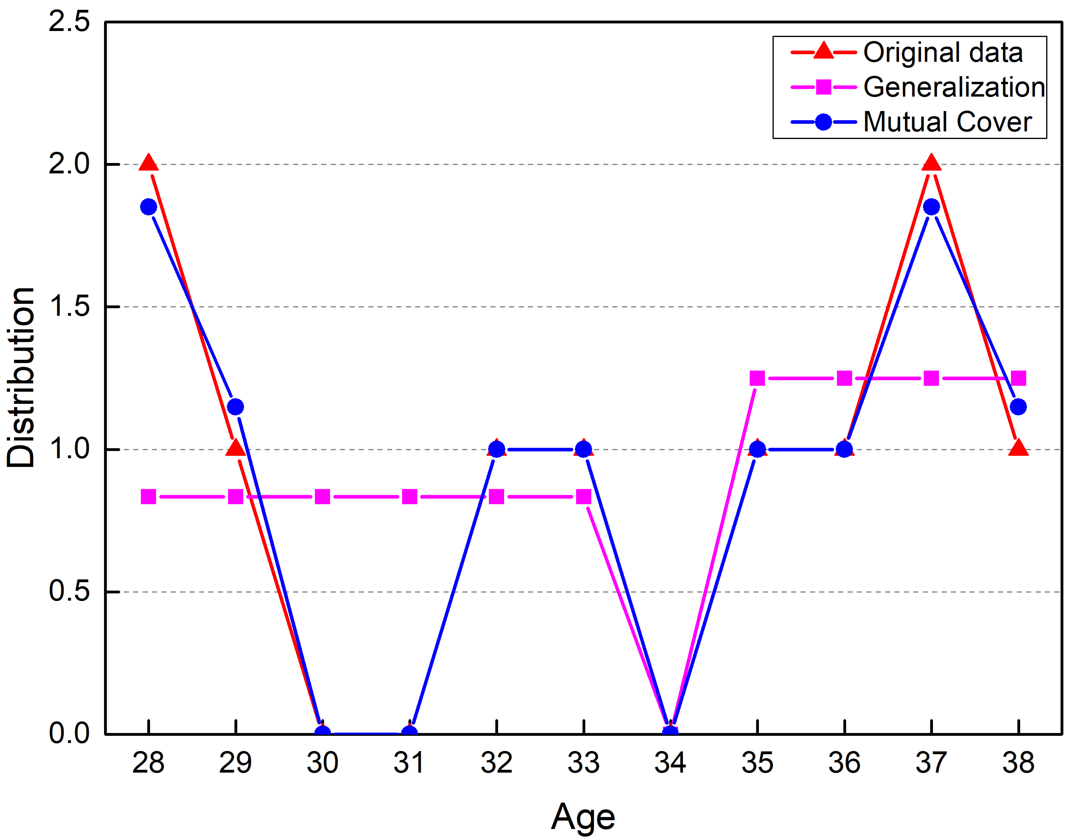}
	\caption{The distribution of age values.}
	\label{fig_2}
\end{figure}

Moreover, the level of protection for identities is possible to be compelled to increase for meeting the condition of $l$-diversity. For example, the generalized table in Figure \ref{fig_1c} must comply with at least 5-anonymity for satisfying 5-diversity even if the demand for protecting identities is not that high. The over-protection causes more information loss because a larger equivalence group often needs larger ranges of generalized values to cover the QI values. For instance, the generalized table in Figure \ref{fig_1b} maintains more detailed QI values than that in Figure \ref{fig_1c}. As a result, to decrease information loss, the protection for sensitive values should not affect the protection for identities excessively.

\subsection{Contributions}
In this work, we propose a novel technique called the Mutual Cover (MuCo) to impede adversary from matching the combination of QI values while overcoming the above issues. The key idea of MuCo is to make similar tuples to cover for each other by randomizing their QI values according to random output tables. 

Specifically, there are three main steps in the proposed approach. First, MuCo partitions the tuples into groups and assigns similar records into the same group as far as possible. Second, the random output tables, which control the distribution of random output values within each group, are calculated to make similar tuples to cover for each other at the minimal cost. Finally, MuCo generates anonymized microdata by replacing the original QI values with random values according to the random output tables. For instance, for the original table in Figure \ref{fig_1a}, MuCo partitions the records into four groups and calculates random output tables on age as shown in Figure \ref{fig_3}. In the random output tables, the rows correspond to the records, and the columns correspond to the ranges of age values. Every entry value denotes the probability that the record carries the column value in the anonymized table. For example, we can observe that Helen is covered with Daphne and Dean, and her age outputs 28 with a probability of 0.7129 and outputs 29 with a probability of 0.2871. Then, MuCo generates an anonymized table in which the original QI values are replaced by the random values according to the random output tables.

\begin{figure}[t]%!h]
	\centering
	\subfigure[]{
		\begin{minipage}[t]{0.45\linewidth}
			\centering
			\includegraphics[width=1.3in]{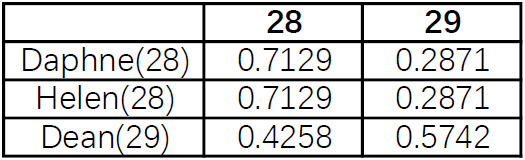}
			\label{fig_3a}
	\end{minipage}}
	\subfigure[]{
		\begin{minipage}[t]{0.45\linewidth}
			\centering
			\includegraphics[width=1.3in]{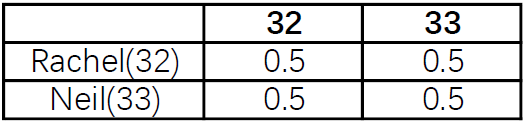}
			\label{fig_3b}
	\end{minipage}}
	\subfigure[]{
		\begin{minipage}[t]{0.45\linewidth}
			\centering
			\includegraphics[width=1.3in]{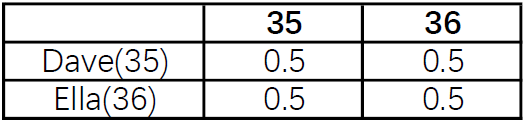}
			\label{fig_3c}
	\end{minipage}}
	\subfigure[]{
		\begin{minipage}[t]{0.45\linewidth}
			\centering
			\includegraphics[width=1.3in]{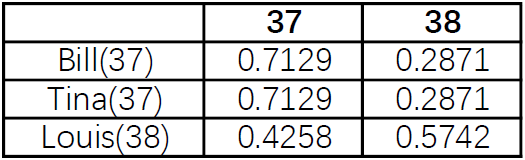}
			\label{fig_3d}
	\end{minipage}}
	\caption{The random output tables.}
	\label{fig_3}
\end{figure}

Additionally, differing from traditional principles that directly confine the values in microdata, we propose a $\delta$-probability principle to control random output tables so as to limit the probability of any QI value being used to re-identify a target person. For instance, the random output tables in Figure \ref{fig_3} comply with $\frac{1}{2}$-probability, and every record is re-identified by using her or his age value with a low probability.

The advantages of MuCo are summarized as follows. First, MuCo can maintain the distributions of original QI values as much as possible. For instance, the sum of each column in Figure \ref{fig_3} is shown by the blue polyline in Figure \ref{fig_2}, and the blue polyline almost coincides with the red polyline representing the distribution in the original data. Second, the anonymization of MuCo is a ``black box'' process for recipients because the only difference between the original data and the anonymized data is that some original QI values are replaced with random values. Thus, the adversary cannot determine which QI values are altered as well as the ranges of variations, causing that the matching tuples are more likely to be wrong or even does not exist when the adversary uses more QI values to match, but the adversary obtains much more matching records if the size of the combination of QI values is not big enough. While for the recipient, the results of query statements are specific records rather than groups. Accordingly, the results are more accurate. The conducted extensive experiments also illustrate the effectiveness of the proposed method.

\section{Related Work}
\label{sec_rela}
\subsection{$k$-Anonymity Based Methods}
The major research of privacy preservation focuses on preventing various disclosures and studying the trade-off between privacy protection and information preservation \cite{li2009on, wong2011can, LIN20099784, li2017cross, KACHA20224075}. The generalization technique has been well-studied by proposing numerous algorithms which can be divided into three schemes: (1) global recoding \cite{lefevre2005incognito}, which transforms same QI values into same generalized value; (2) local recoding \cite{xu2006utility}, which can transform same QI values into different generalized values; and (3) multi-dimensional recoding \cite{lefevre2006mondrian}, which partitions microdata into equivalence groups and generalizes the QI values inside each equivalence group. However, generalization hardly preserves the distributions of original QI values that always causes a huge cost of protecting privacy. 

Comparing to generalization, bucketization technique \cite{xiao2006anatomy, li2021local} maintains excellent information utility because it preserves all the original QI values. However, most existing approaches cannot prevent identity disclosure, and the existence of individuals in published table is likely to be disclosed \cite{nergiz2007hiding}. Furthermore, the QI values of individuals can be easily exposed that increases the background knowledge of adversary to learn the pattern of QI values and sensitive values in the released table \cite{kifer2009attacks, sei2017anonymization}.

\subsection{Differential Privacy}
Differential privacy \cite{dwork2006differential, zhang2014towards}, which is proposed for query-response systems, prevents the adversary from inferring the presence or absence of any individual in the database by adding random noise (e.g., Laplace Mechanism \cite{dwork2006calibrating} and Exponential Mechanism \cite{mcsherry2007mechanism}) to aggregated results. However, differential privacy also faces the contradiction between privacy protection and data analysis \cite{han2015sensitive}. For instance, a smaller $\epsilon$ for $\epsilon$-differential privacy provides better protection but worse information utility.

In recent years, local differential privacy \cite{kasiviswanathan2011what, bun2019heavy} has attracted increasing attention because it is particularly useful in distributed environments where users submit their sensitive information to untrusted curator. Randomized response \cite{holohan2017optimal} is widely applied in local differential privacy to collect users' statistics without violating the privacy. Inspired by local differential privacy, this paper uses the method of randomized response to perturb original QI values before release to prevent the disclosure of matching the combination of QI values. 

Note that, the application scenarios of differential privacy and the models of $k$-anonymity family are different. Differential privacy adds random noise to the answers of the queries issued by recipients rather than publishing microdata. While the approaches of $k$-anonymity family sanitize the original microdata and publish the anonymized version of microdata. Therefore, differential privacy is inapplicable to the scenario we addressed in this paper.

\section{The MuCo Model}
\label{sec_proba}
Suppose that a microdata table $T$ consists of $d$ QI attributes, denoted as $A^{QI}_1, A^{QI}_2, \cdots, A^{QI}_d$, and a sensitive attribute, denoted as $A^{SA}$. Each attribute can be either categorical or continuous, and $D[A]$ represents the domain of attribute $A$. For any tuple $t \in T$, $t[A]$ represents the value of $t$ on attribute $A$.

\subsection{Formalization}
\label{sec_for}
Given a set of tuples, MuCo partitions the tuples into groups, calculates a random output table on each QI attribute inside each group, and generates random values to replace the original QI values according to the random output tables. The formalization is as follows.

\begin{definition}[Random Output Table]
	\label{def_table}
	A random output table contains the probabilities that the records output the corresponding QI values. Suppose that a group includes $m$ tuples $\{t_{1}, t_{2}, \cdots, t_{m}\}$, and the output values are represented as $\{v_{1}, v_{2}, \cdots, v_{n}\}$. The random output table forms a $m \times n$ matrix  
	\begin{displaymath}
		\label{equ_matrix}
		\left(\begin{array}{cccc}
			p_{11} & p_{12} & \cdots & p_{1n}\\
			\vdots & \ddots &        & \vdots\\
			p_{m1} & p_{m2} & \cdots & p_{mn}
		\end{array}\right),
	\end{displaymath}
	where $p_{ij}$ represents the probability that $t_{i}$ carries $v_{j}$, such that
	\begin{displaymath}
		p_{ij}=p(v^{i}_{output}=v_{j}|t_{i}), i \in [1,m]\ and\ j \in [1, n],
	\end{displaymath}
	where $v^{i}_{output}$ denotes the random output value for $t_{i}$. Moreover, we also have
	\begin{displaymath}
		p_{ij} \geq 0, i \in [1, m]\ and\ j \in [1, n],
	\end{displaymath}
	and
	\begin{displaymath}
		\sum^{n}_{j=1}p_{ij}=1, ~~i \in [1, m],
	\end{displaymath}
	such that each probability must be non-negative, and every record can only carry a random value within $\{v_{1}, v_{2}, \cdots, v_{n}\}$.
\end{definition}

\begin{definition}[Mutual Cover Strategy]
	\label{def_mutual}
	Given a microdata table $T$, the mutual cover strategy partitions $T$ into groups, calculates a random output table on each QI attribute for the records in every group, and generates random values according to probabilities in the random output tables.
\end{definition}

\begin{figure}[t]
	\centering
	\subfigure[]{
		\begin{minipage}[t]{0.45\linewidth}
			\centering
			\includegraphics[height=0.8in]{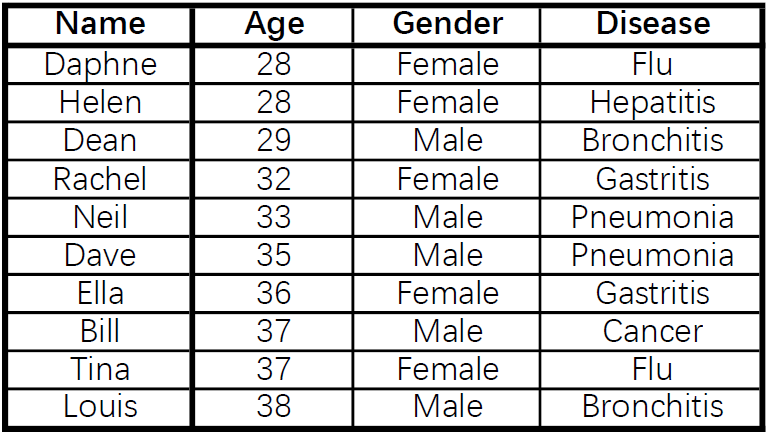}
			\label{fig_4a}
	\end{minipage}}
	\subfigure[]{
		\begin{minipage}[t]{0.45\linewidth}
			\centering
			\includegraphics[height=0.8in]{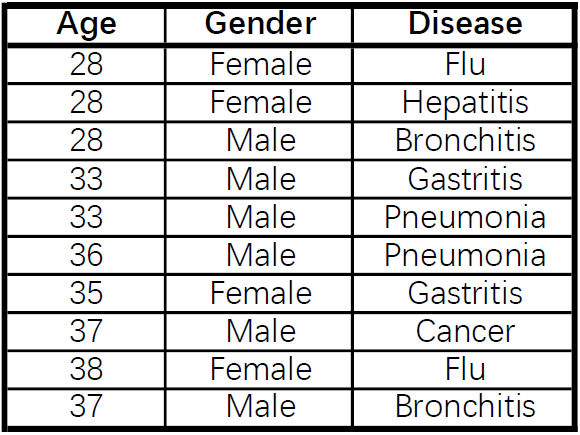}
			\label{fig_4b}
	\end{minipage}}
	\caption{An example of MuCo.}
	\label{fig_4}
\end{figure}

For instance, suppose that we add another QI attribute of gender as shown in Figure \ref{fig_4a}, the mutual cover strategy first divides the records into groups in which the records in the same group cover for each other by perturbing their QI values. Then, the mutual cover strategy calculates a random output table on each QI attribute (i.e., age and gender) within each group. Finally, an anonymized table, as shown in Figure \ref{fig_4b}, is generated by replacing the original QI values with the random output values. Note that, since the anonymization process is hidden, the adversary does not know the partition of groups and the random output tables. Therefore, the adversary can not determine which QI values are changed as well as the ranges of the variations.

Next, a principle called $\delta$-probability is proposed to control the probabilities in the random output tables.

\begin{definition}[$\delta$-Probability]
	\label{def_skpro}
	For an anonymized table of MuCo, it complies with $\delta$-probability, if all the random output tables satisfy the following equation
	\begin{displaymath}
		\frac{max(p_{j})}{\sum^{m}_{i=1}p_{ij}} \leq \delta, ~~j \in [1, n],
	\end{displaymath}
	where $max(p_{j})$ is the maximum probability in column $j$, $m$ is the number of records in group, and $n$ is the size of output values.
\end{definition}

$\delta$-Probability limits the probability, to a large extent, that an adversary re-identifies a target person by matching a QI value.

\begin{example}
	Suppose that an adversary aims to find the record of Helen in the anonymized table by matching her age value of 28, and the anonymization process is hidden for the adversary. Then, the adversary does not know the partition of groups and the random output table including the range and probabilities of the random output values of Helen. According to Figure \ref{fig_3}, there are three records, namely Daphne, Helen, and Dean, may carry 28. Therefore, the probability that Helen is re-identified by matching her age value of 28 is calculated as
	
	\begin{footnotesize}
		\begin{displaymath}
			p(Helen) = \sum_{i=1}^{3}\frac{p(v_{Helen}=28,num(28)=i)}{i} \approx 0.3741,
		\end{displaymath}
	\end{footnotesize}
	where $num(28)$ denotes the number of 28s in the anonymized table, and $v_{Helen}$ represents the output value of Helen. 
	
	As a result, Helen maintains her original QI value with a high probability (i.e., 0.7129) but being re-identified by matching her age value with a low probability.
\end{example}

Note that, in the worst case, although the curator does not publish the details of anonymization, the adversary may know that the anonymized table is sanitized by MuCo. Therefore, the adversary may find the record of target person by using approximate QI values. However, the adversary must obtains excessive matching records which definitely decreases the accuracy of re-identifying the target person, especially in high-dimensional microdata.

\subsection{Analysis on Parameter $\delta$}
\label{sec_ana}
In this subsection, we discuss the property and effect of $\delta$ on the anonymized table complying with $\delta$-probability.

\begin{property}
	\label{proper_range}
	Given an anonymized table of MuCo complying with $\delta$-probability, for any group, $\delta$ must be in the range of $[1/m, 1]$, where $m$ is the number of tuples in the group.
\end{property}

\begin{proof}
	According to Definition \ref{def_skpro}, for any column $j$ in the random output table, we have 
	\begin{displaymath}
		\frac{max(p_{j})}{\sum^{m}_{i=1}p_{ij}} \leq \delta,
	\end{displaymath}
	where $m$ is the number of tuples in the group. For any column $j$, we have
	\begin{displaymath}
		max(p_{j}) \leq \sum^{m}_{i=1}p_{ij} \leq m \cdot max(p_{j}).
	\end{displaymath}
	Thus,
	\begin{displaymath}
		\frac{max(p_{j})}{m \cdot max(p_{j})} \leq \frac{max(p_{j})}{\sum^{m}_{i=1}p_{ij}} \leq \frac{max(p_{j})}{max(p_{j})},
	\end{displaymath}
	then we have
	\begin{displaymath}
		\frac{1}{m} \leq \frac{max(p_{j})}{\sum^{m}_{i=1}p_{ij}} \leq 1.
	\end{displaymath}
	Therefore, we conclude that
	\begin{displaymath}
		\delta \geq \frac{max(p_{j})}{\sum^{m}_{i=1}p_{ij}} \geq \frac{1}{m},
	\end{displaymath}
	and $\delta$ does not affect random output tables when $\delta$ is greater than 1, such that $\delta$ must be in the range of $[1/m, 1]$. \qed
\end{proof}

Property \ref{proper_range} demonstrates the constraint that the range of $\delta$ depends on the number of tuples in the group. Next, the relation between the value of $\delta$ and the number of correlative tuples, given a released QI value, is discussed as follows.

\begin{corollary}
	\label{cor_kn}
	Given an anonymized table of MuCo complying with $\delta$-probability, each QI value in the released table corresponds to at least $\lceil 1/\delta \rceil$ records.
\end{corollary}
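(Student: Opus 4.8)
The plan is to bound the number of records that can produce a given released value by sandwiching the column sum $\sum_{i=1}^m p_{ij}$ between two expressions involving $max(p_j)$. I would fix any QI value $v_j$ that appears in the released table and let $k_j$ denote the number of records in its QI group that can output $v_j$, that is, those with $p_{ij}>0$; this count is exactly what ``corresponds to $v_j$'' should mean. Since $v_j$ actually appears in the released table, at least one such record exists, so $max(p_j)>0$ and dividing by it is legitimate.

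First I would read off a lower bound from the $\delta$-probability constraint of Definition \ref{def_skpro}: rearranging $\frac{max(p_j)}{\sum_{i=1}^m p_{ij}} \leq \delta$ gives $\sum_{i=1}^m p_{ij} \geq \frac{max(p_j)}{\delta}$. Next I would obtain a matching upper bound straight from the meaning of $max(p_j)$: only the $k_j$ records with $p_{ij}>0$ contribute to the column sum, and each such probability is at most $max(p_j)$, so $\sum_{i=1}^m p_{ij} \leq k_j \cdot max(p_j)$.

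Chaining these two inequalities yields $\frac{max(p_j)}{\delta} \leq k_j \cdot max(p_j)$, and dividing through by $max(p_j)>0$ leaves $k_j \geq 1/\delta$. Since $k_j$ counts records it is a positive integer, so I would round up to conclude $k_j \geq \lceil 1/\delta \rceil$, which is the claim. The argument is essentially immediate and presents no genuine obstacle; the only points that warrant care are making the interpretation of ``corresponds to'' precise (the number of records with positive output probability for $v_j$) and justifying the final integrality step, since the raw bound $k_j \geq 1/\delta$ is real-valued and the passage to $\lceil 1/\delta \rceil$ relies on $k_j$ being an integer.
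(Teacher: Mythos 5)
Your proof is correct and follows essentially the same route as the paper's: both bound the column sum from below by $\frac{1}{\delta}\max(p_j)$ via the $\delta$-probability constraint, bound it from above by the number of positive entries times $\max(p_j)$, and conclude by integrality. The only difference is cosmetic --- you invoke integrality of $k_j$ directly to round $1/\delta$ up, whereas the paper reaches the same conclusion through a slightly clumsier case split on whether $\lfloor 1/\delta \rfloor = \lceil 1/\delta \rceil$; your version is the cleaner write-up of the identical argument.
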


\begin{proof}
	According to Definition \ref{def_skpro}, for any column $j$ in the random output table, we have 
	\begin{displaymath}
		\frac{max(p_{j})}{\sum^{m}_{i=1}p_{ij}} \leq \delta,
	\end{displaymath}
	where $m$ is the number of tuples in the group. Thus, we have
	\begin{displaymath}
		\sum^{m}_{i=1}p_{ij} \geq \frac{1}{\delta} \cdot max(p_{j}).
	\end{displaymath}
	For any QI value in the released table, the corresponding column $j$ in the random output table must have
	\begin{displaymath}
		max(p_{j}) > 0.
	\end{displaymath}
	According to Property \ref{proper_range}, $\delta$ is within $[1/m, 1]$. Let $k$ denote $\lfloor 1/\delta \rfloor$, then 
	\begin{displaymath}
		\begin{cases}
			k \cdot max(p_{j}) < \frac{1}{\delta} \cdot max(p_{j}) & \text{if $\lfloor 1/\delta \rfloor \neq \lceil 1/\delta \rceil$} \\
			k \cdot max(p_{j}) = \frac{1}{\delta} \cdot max(p_{j}) & \text{if $\lfloor 1/\delta \rfloor = \lceil 1/\delta \rceil$}
		\end{cases}.
	\end{displaymath}
	Therefore, we have
	\begin{displaymath}
		\begin{cases}
			\sum^{m}_{i=1}p_{ij} > k \cdot max(p_{j}) & \text{if $\lfloor 1/\delta \rfloor \neq \lceil 1/\delta \rceil$} \\
			\sum^{m}_{i=1}p_{ij} \geq k \cdot max(p_{j}) & \text{if $\lfloor 1/\delta \rfloor = \lceil 1/\delta \rceil$}
		\end{cases}.
	\end{displaymath}
	Thus, there are at least $\lceil 1/\delta \rceil$ probabilities greater than 0 in the column, such that each QI value in the released table corresponds to at least $\lceil 1/\delta \rceil$ records. \qed
\end{proof}

For instance, since the random output tables in Figure \ref{fig_3} comply with $\frac{1}{2}$-probability, for any QI value whose corresponding column has at least one probability greater than 0, there are at least 2 records can carry the QI value.

\section{The MuCo Algorithm}
\label{sec_algo}
This section presents the algorithm to implement the Mutual Cover (MuCo) framework\footnote{The code is available at https://github.com/liboyuty/Mutual-Cover.}. We aim to achieve two goals. First, MuCo satisfies $\delta$-probability to hinder the adversary from matching the combination of QI values. Second, the records cover for each other at the minimum cost, i.e., maintaining the original QI values as much as possible. The procedure is given in Algorithm \ref{alg_mc}.

\begin{algorithm}
	\caption{MuCo$(T,\delta)$}
	\label{alg_mc}
	\begin{algorithmic}[1]
		\STATE $T_{anony}=T$
		\STATE $groups=divide\_table(T_{anony})$
		\FOR {\bf{each} $group \in groups$}
		\FOR {\bf{each} $attri \in \{A^{QI}_1, A^{QI}_2, \cdots, A^{QI}_d\}$}
		\STATE $rtable=calculate\_table(group, attri, \delta)$
		\STATE $output\_values(group, attri, rtable)$
		\ENDFOR
		\STATE $randomize\_unchanged\_tuples(group)$
		\ENDFOR
		\STATE \bf{return} $T_{anony}$
	\end{algorithmic}
\end{algorithm}

$T_{anony}$ denotes the anonymized version of $T$, and we use Mondrian \cite{lefevre2006mondrian} to divide $T_{anony}$ into groups (line 2). Note that, to prevent the attribute disclosure, the partition has to follow an appropriate principle in case the similar records in the same group do not carry enough different sensitive values. In each iteration, the algorithm calculates the random output table on each QI attribute within each group (line 5) and replaces the original QI values with random values in $T_{anony}$ (line 6). $randomize\_unchanged\_tuples(group)$ randomly perturbs a QI value for the tuples whose all QI values are originally preserved. Note that, MuCo only replaces some original QI values with random values in $T_{anony}$. Therefore, the whole procedure is hidden for the adversary. 

Next, we elaborate $calculate\_table(group, attri, \delta)$ and $randomize\_unchanged\_tuples(group)$, respectively. To calculate the random output tables that can retain original QI values as much as possible, we first define the distance function as follows.

\begin{definition}[Distance Function]
	\label{def_distance}
	The distance function, denoted as $dis(\cdot)$, measures the distance between two values, such that for any two values $v_{1}$ and $v_{2}$, $dis(v_{1},v_{2}) \in \mathbb{R}^{+}_{0}$, where $\mathbb{R}^{+}_{0}$ represents the region of non-negative real number.
\end{definition}

Note that, it is difficult to define the distance between QI values on categorical attribute. One feasible solution is to build a customized hierarchy tree for each categorical attribute and assign a specific distance value for each pair of leaf nodes. While for continuous attribute, we implement $dis(\cdot)$ by $l_{1}$-distance defined as follows.

\begin{definition}[$l_{1}$-Distance]
	\label{def_l1}
	Given two values $v_{1}$ and $v_{2}$, their $l_{1}$-distance is expressed as
	\begin{displaymath}
		||v_{1}-v_{2}||_{1}=|v_{1}-v_{2}|.
	\end{displaymath}
\end{definition}

We transform the problem to a linear programming model and use the primal-dual path following algorithm \cite{andersen2000the} to calculate the random output tables. According to Definition \ref{def_table} and Definition \ref{def_skpro}, the linear programming for implementing $calculate\_table(group, attri, \delta)$ is given as
\begin{align*}
	\text{Min} &~~~~\sum_{i=1}^{m}\sum_{j=1}^{n}dis(v^{ori}_{i},v^{col}_{j})p_{ij} \label{equ_1}\\
	\textit{s.t.~~} &(1)~~ \sum_{j=1}^{n}p_{ij}=1, ~~i \in [1, m], \\
	& (2)~~\frac{max(p_{j})}{\sum_{i=1}^{m}p_{ij}} \leq \delta, ~~j \in [1, n], \\
	& (3)~~p_{ij} \geq 0, ~~i \in [1, m]\ and\ j \in [1, n].
\end{align*}
Here $v^{ori}_{i}$ is the original value of $t_{i}$ on attribute $attri$, $v^{col}_{j}$ is the value of column $j$, $m$ is the number of records in $group$, $n$ is the size of domain within $group$ on attribute $attri$. The meaning of the objective function is to assign the probability of each record to the output values (i.e., the values within the domain of $attri$ inside $group$), and make the values, which have  longer distances to the original values of the record, carry lower probabilities on the premise of satisfying the constraints.

Finally, $randomize\_unchanged\_tuples(group)$ is described in detail. There are some tuples that may output all the original QI values in the anonymized table due to randomness, causing that their identities may be disclosed with high probability. Therefore, we perturb a QI value for each unchanged tuple\footnote{The number of perturbed QI values can be set according to actual demands.}. Algorithm \ref{alg_rut} provides the process of $randomize\_unchanged\_tuples(group)$.

\begin{algorithm}
	\caption{randomize\_unchanged\_tuples$(group)$}
	\label{alg_rut}
	\begin{algorithmic}[1]
		\STATE $attribute\_weights=count\_weights(group)$
		\STATE $unchanged\_tuples=pick\_unchanged(group)$
		\FOR {\bf{each} $tuple \in unchanged\_tuples$}
		\WHILE {$check\_unchanged(tuple)$}
		\STATE $attri=choose\_attri(attribute\_weights)$
		\STATE $replace\_QI\_value(tuple, attri, group)$
		\ENDWHILE
		\ENDFOR
		\STATE \bf{return} $group$
	\end{algorithmic}
\end{algorithm}

$attribute\_weights$ denotes the weights of QI attributes, and each weight is calculated by $\frac{dis_{max}(attri,group)}{dis_{max}(attri,T)}$, where $dis_{max}(attri,group)$ and $dis_{max}(attri,T)$ represent the max distance between values on attribute $attri$ within $group$ and $T$, respectively. $unchanged\_tuples$ is the set of tuples whose all QI values are originally preserved. In each iteration, the algorithm chooses a tuple from $unchanged\_tuples$, picks an attribute according to $attribute\_weights$, and replaces an original QI value with a random value from the range within $group$ on attribute $attri$ until at least one QI value is different from the original value.

\section{Experimental Results}
\label{sec_expe}
This section evaluates the effectiveness of the proposed MuCo algorithm. We apply Mondrian \cite{lefevre2006mondrian}, which is one of the most effective generalization approaches, and Anatomy \cite{xiao2006anatomy}, which always preserves the best information utility, as the  baselines. We use the US Census data \cite{data}, eliminate the tuples with missing values, and randomly select 40,152 tuples with eight attributes. The QI attributes are gender, age, relationship, marital status, race, education, and hours per week, and the sensitive attribute is salary. Table \ref{tab_attri} describes the attributes in detail.

\begin{table}[!t]
	\centering
	\caption{Description of the attributes.}
	\label{tab_attri}
	\begin{tabular}{|c|c|c|c|}
		\hline
		&\bf{Attribute}&\bf{Type}&\bf{Size}\\
		\hline
		1 & Gender & Categorical & 2\\
		\hline
		2 & Age & Continuous & 55\\
		\hline
		3 & Relationship & Categorical & 13\\
		\hline
		4 & Marital status & Categorical & 6\\
		\hline
		5 & Race & Categorical & 9\\
		\hline
		6 & Education & Categorical & 10\\
		\hline
		7 & Hours per week & Continuous & 95\\
		\hline
		8 & Salary & Continuous & 851\\
		\hline
	\end{tabular}
\end{table}

\subsection{Privacy Protection}
\label{sec_privacy}
This experiment measures the effectiveness of privacy preservation for MuCo. We assume that an adversary has all the QI values of the target person, and each QI value is combined to match the target person with the probability of $P_{match}$. This experiment counts the number of matching records for all the tuples in the anonymized tables, and the average probabilities of identity disclosure and attribute disclosure per tuple are calculated. Let MuCo comply with $\delta$-probability and the partition of microdata (i.e., $divide\_table(T_{anony})$ in Algorithm \ref{alg_mc}) satisfy $l$-diversity, where the parameter $\delta$ is set at $\frac{1}{5}$, $\frac{1}{6}$, $\frac{1}{7}$, $\frac{1}{8}$ and $\frac{1}{10}$, respectively, and $l$ is assigned to 10. To reduce the accidental effect of randomness, we run MuCo with the same parameters for 10 times, the results are shown in Figure \ref{fig_dis}. Additionally, since MuCo generates 10 anonymized tables in every experiment, we use boxplot to report the average value and variance at the same time for MuCo.

\begin{figure*}[!t]
	\centering
	\subfigure[$P_{match}=0.3$]{\includegraphics[width=1.75in]{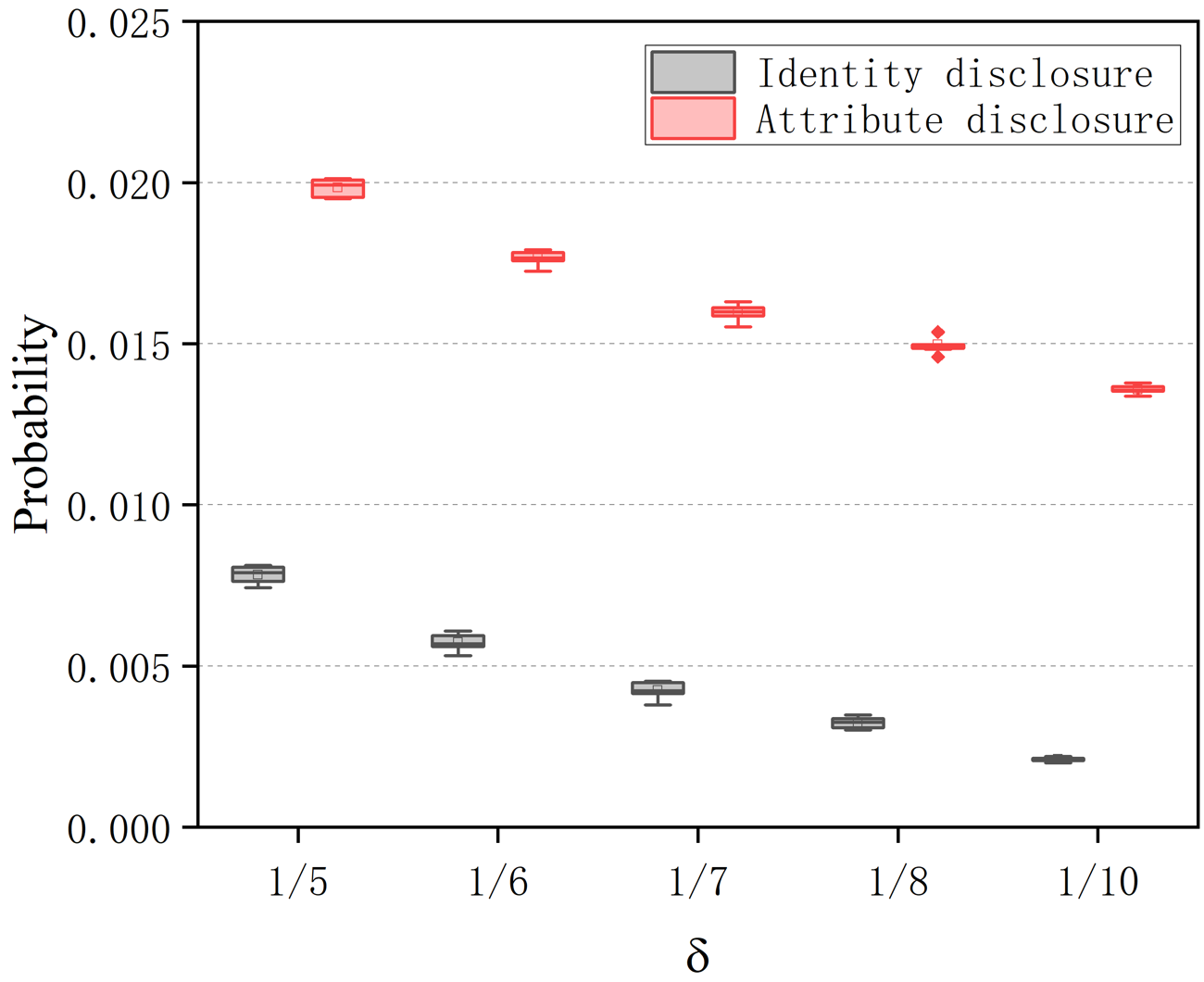}
		\label{fig_iden_dis_0.3}}
	\subfigure[$P_{match}=0.5$]{\includegraphics[width=1.75in]{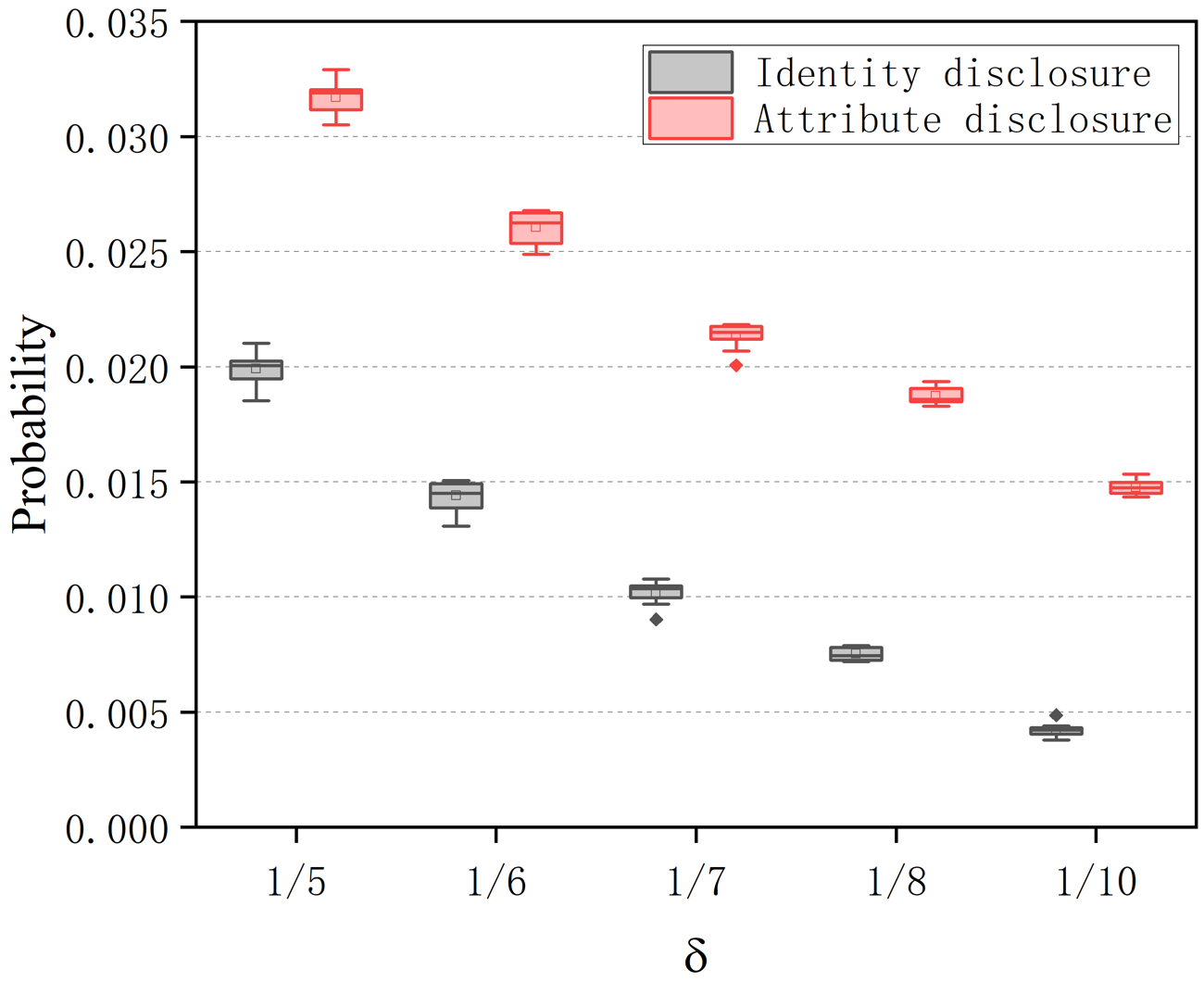}
		\label{fig_iden_dis_0.5}}
	\subfigure[$P_{match}=0.7$]{\includegraphics[width=1.75in]{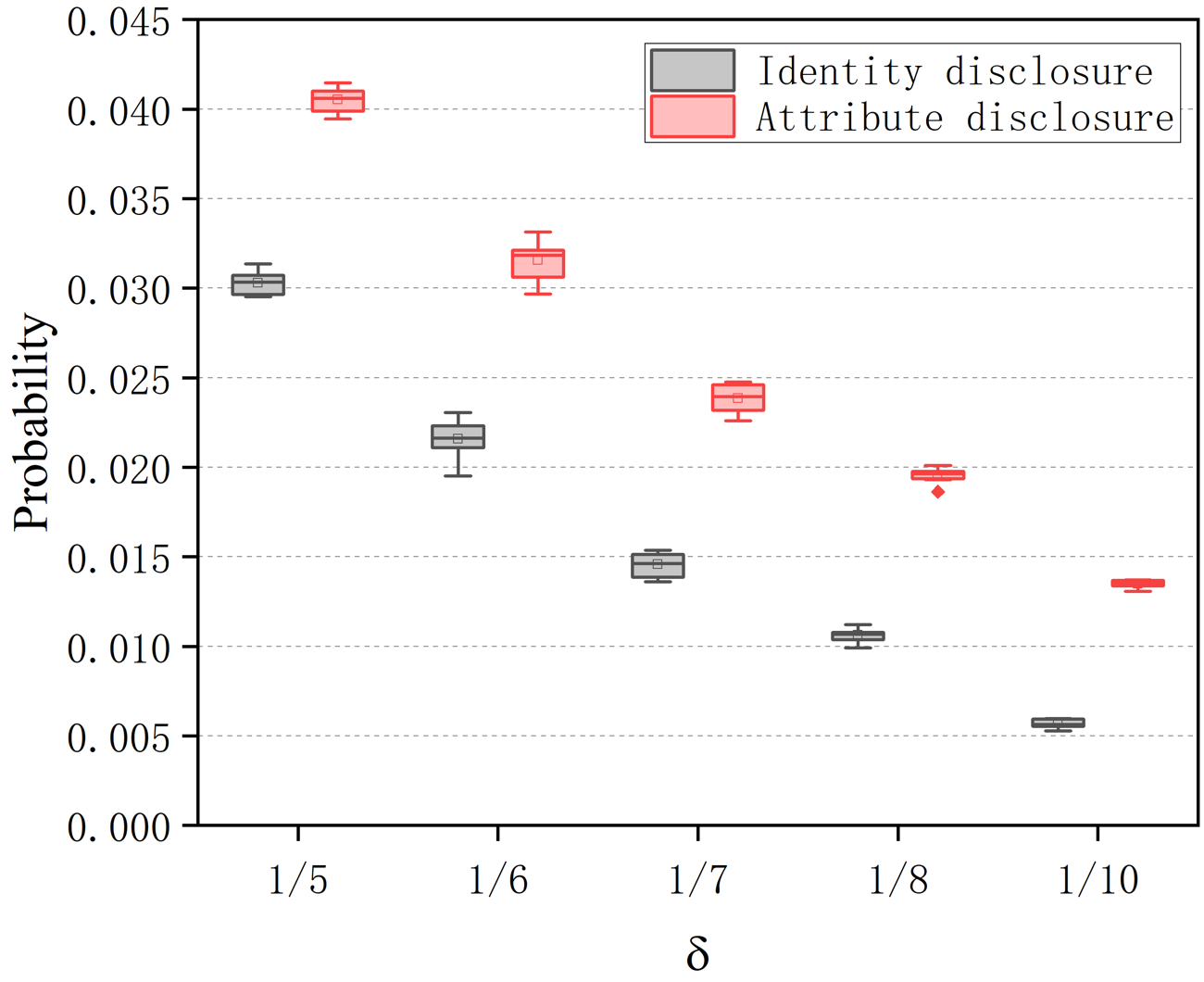}
		\label{fig_iden_dis_0.7}}
	\subfigure[$P_{match}=0.8$]{\includegraphics[width=1.75in]{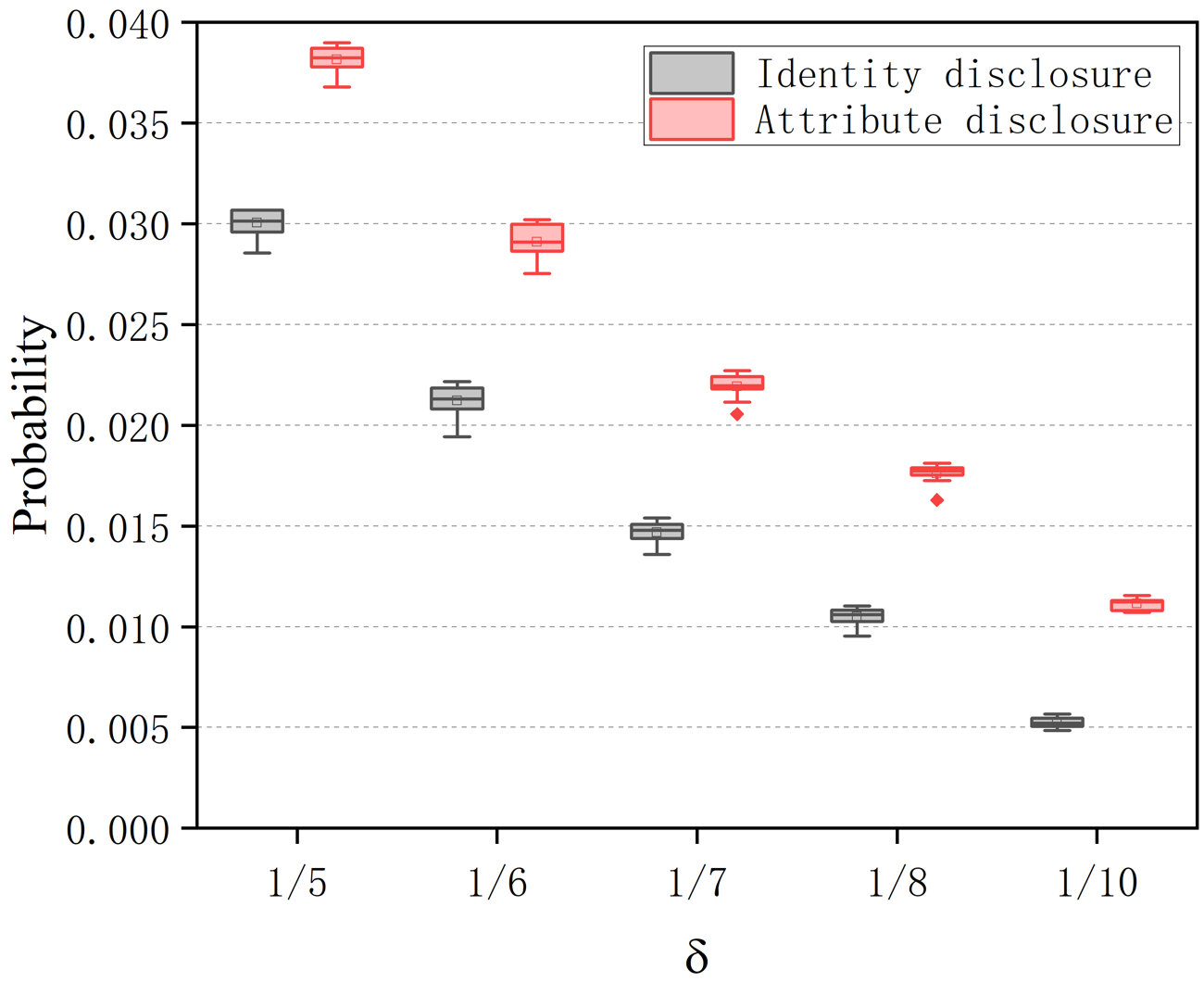}
		\label{fig_iden_dis_0.8}}
	\subfigure[$P_{match}=0.9$]{\includegraphics[width=1.75in]{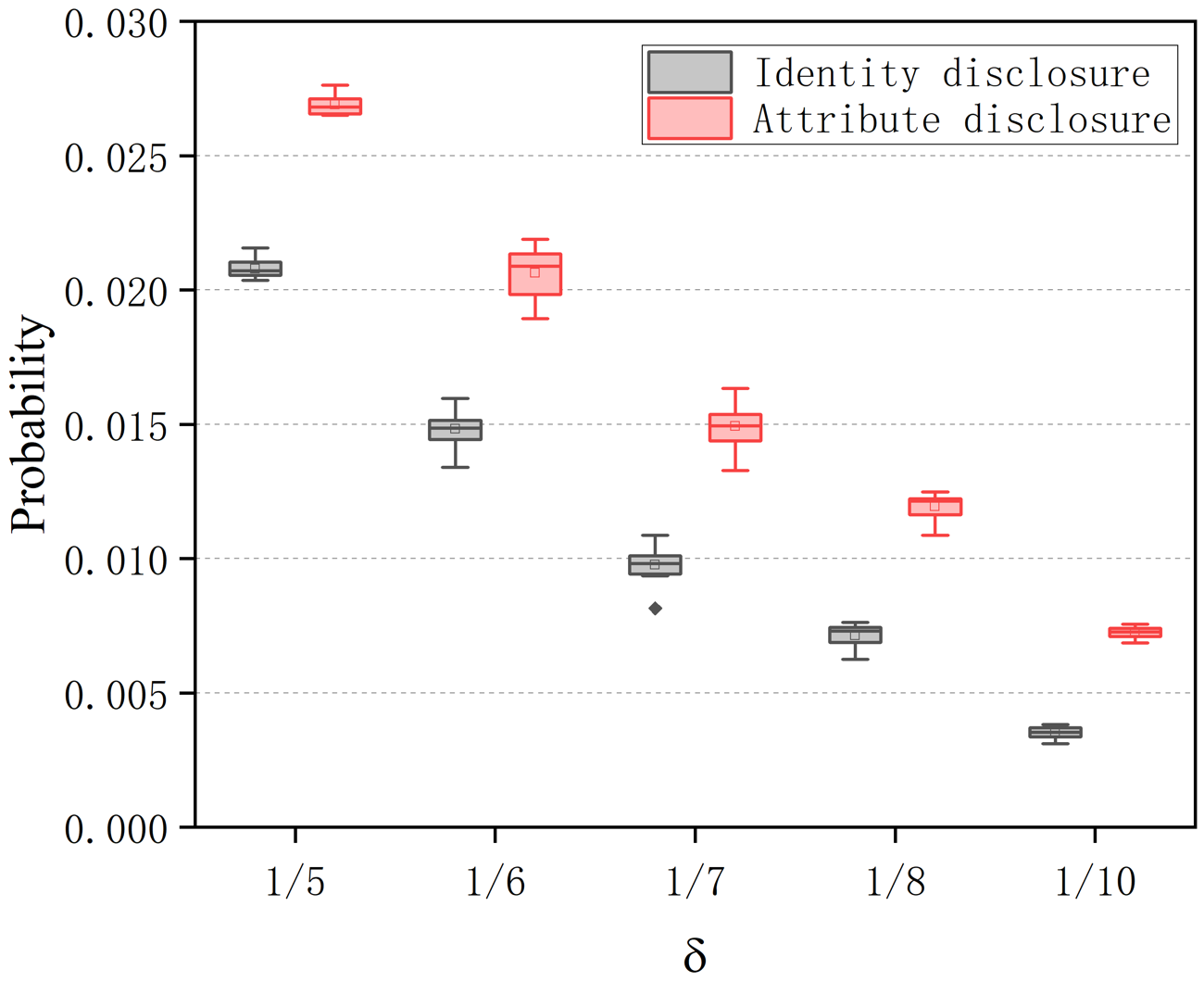}
		\label{fig_iden_dis_0.9}}
	\subfigure[$P_{match}=1$]{\includegraphics[width=1.75in]{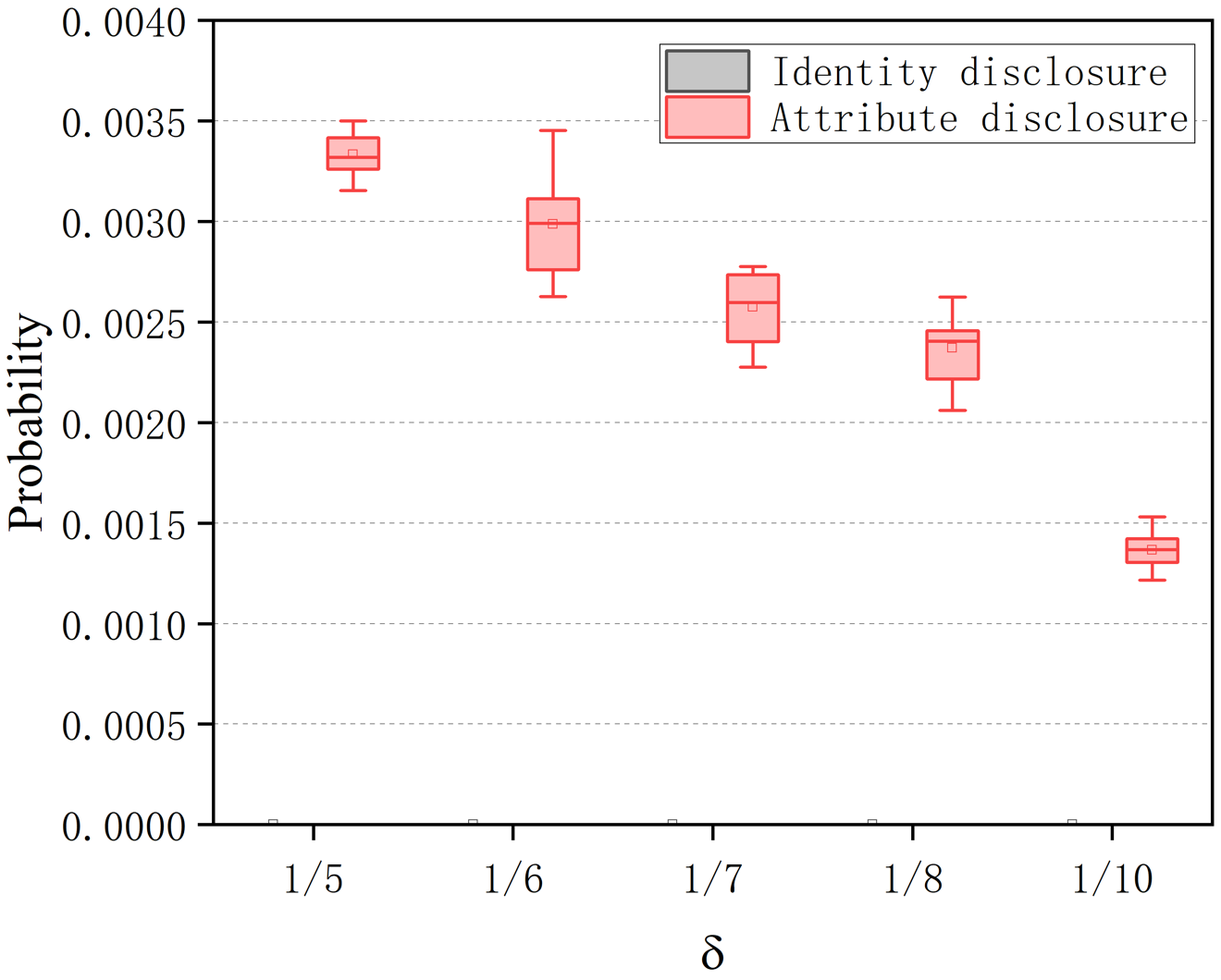}
		\label{fig_iden_dis_1}}
	\caption{Disclosure probability of MuCo.}
	\label{fig_dis}
\end{figure*}

\begin{figure}[!t]
	\centering
	\includegraphics[width=1.75in]{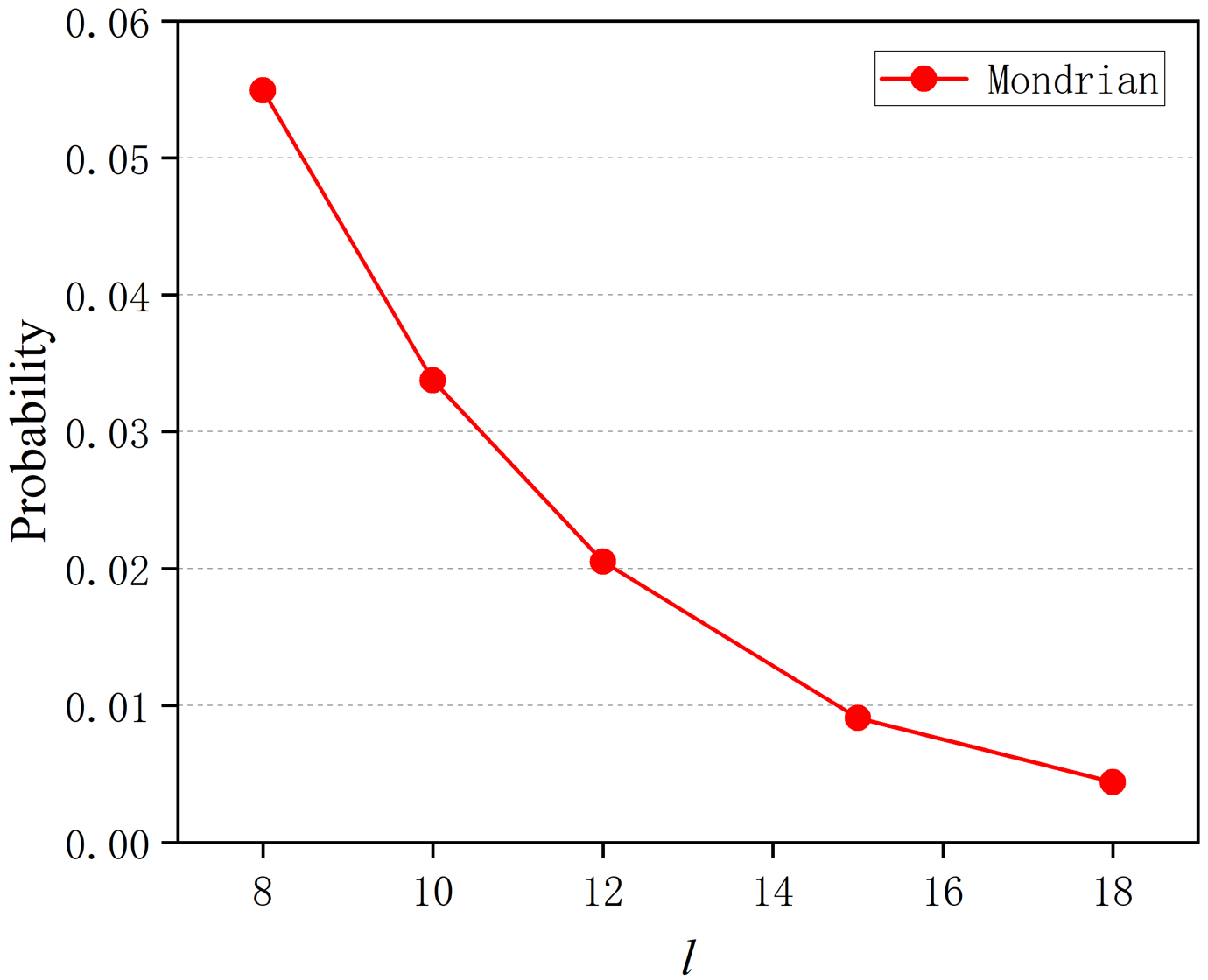}
	\caption{Disclosure probability of Mondrian.}
	\label{fig_dis_mon}
\end{figure}

Observing from Figure \ref{fig_dis}, MuCo provides effective protections against identity disclosure and attribute disclosure. The identity disclosure and attribute disclosure decrease with the reduction of $\delta$ because the released QI values correspond to more records with smaller $\delta$ according to Corollary \ref{cor_kn}. Therefore, each record is covered with substantial other records that dramatically reduces the probability of being re-identified. Note that, since we perform $randomize\_unchanged\_tuples(group)$ in Algorithm \ref{alg_mc}, the probabilities of identity disclosure in Figure \ref{fig_iden_dis_1} keep at 0. Moreover, the probability of disclosure increases when $P_{match}$ raises from 0.3 to 0.7, the probabilities are very similar when $P_{match}$ are 0.7 and 0.8, respectively, and the probability reduces when $P_{match}$ is greater than 0.8. Consequently, the experiments demonstrate that the adversary can still be confused even if he has known much information about the target person. Additionally, the adversary cannot determine the optimal $P_{match}$ in practice because the anonymization process is hidden for the adversary.

\subsection{Information Loss}
\label{sec_il}
This experiment measures the information loss of MuCo. Note that, the mechanism of MuCo is much more different from that of generalization. Thus, for the sake of fairness, we compare the information loss of MuCo and Mondrian when they provide the same level of protections. Then, the experiment measures the effectiveness of protection via the information loss, such that a better anonymization algorithm achieves the same level of protection with a smaller cost. We first implement Mondrian complying with $l$-diversity, where $l$ is set at 8, 10, 12, 15, and 18, respectively, and the disclosure probabilities are shown in Figure \ref{fig_dis_mon}. We can find that the disclosure probabilities\footnote{Note that, the probabilities of identity disclosure and attribute disclosure are the same when Mondrian complies with $l$-diversity.} of Mondrian in Figure \ref{fig_dis_mon} are similar with that of MuCo in Figure \ref{fig_iden_dis_0.7}. Then, we can compare the information loss of Mondrian with that of MuCo using the same parameters as the last experiment. A popular metric $ILoss$ \cite{xiao2006personalized} is applied to estimate the information loss, and the results are shown in Figure \ref{fig_pen}.

\begin{figure}[!t]
	\centering
	\subfigure[MuCo]{
		\begin{minipage}[t]{0.45\linewidth}
			\centering
			\includegraphics[width=1.6in]{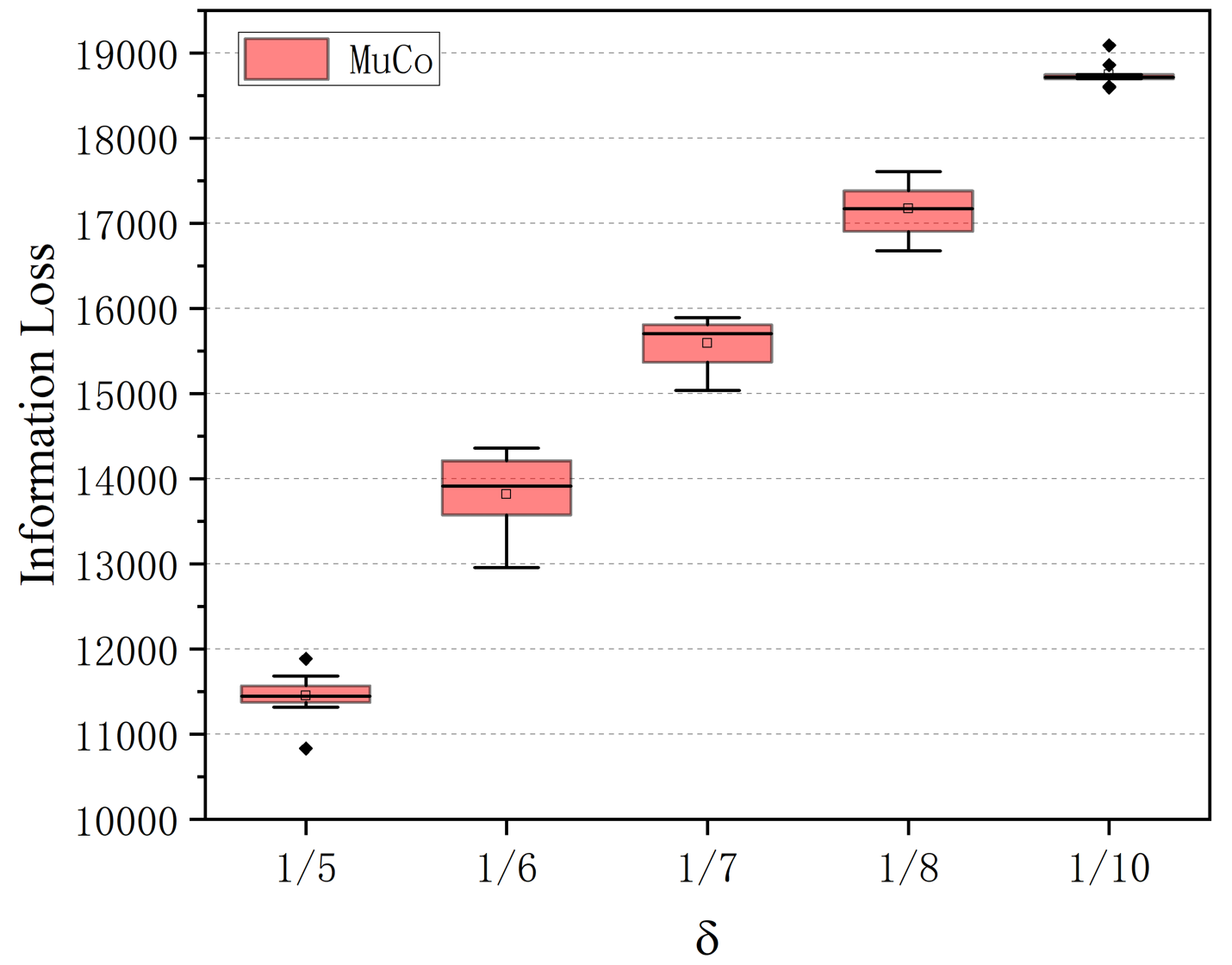}
			\label{fig_pen_mutual}
			\vspace{-0.5em}
	\end{minipage}}
	\hspace{0.1cm}
	\subfigure[Mondrian]{
		\begin{minipage}[t]{0.45\linewidth}
			\centering
			\includegraphics[width=1.6in]{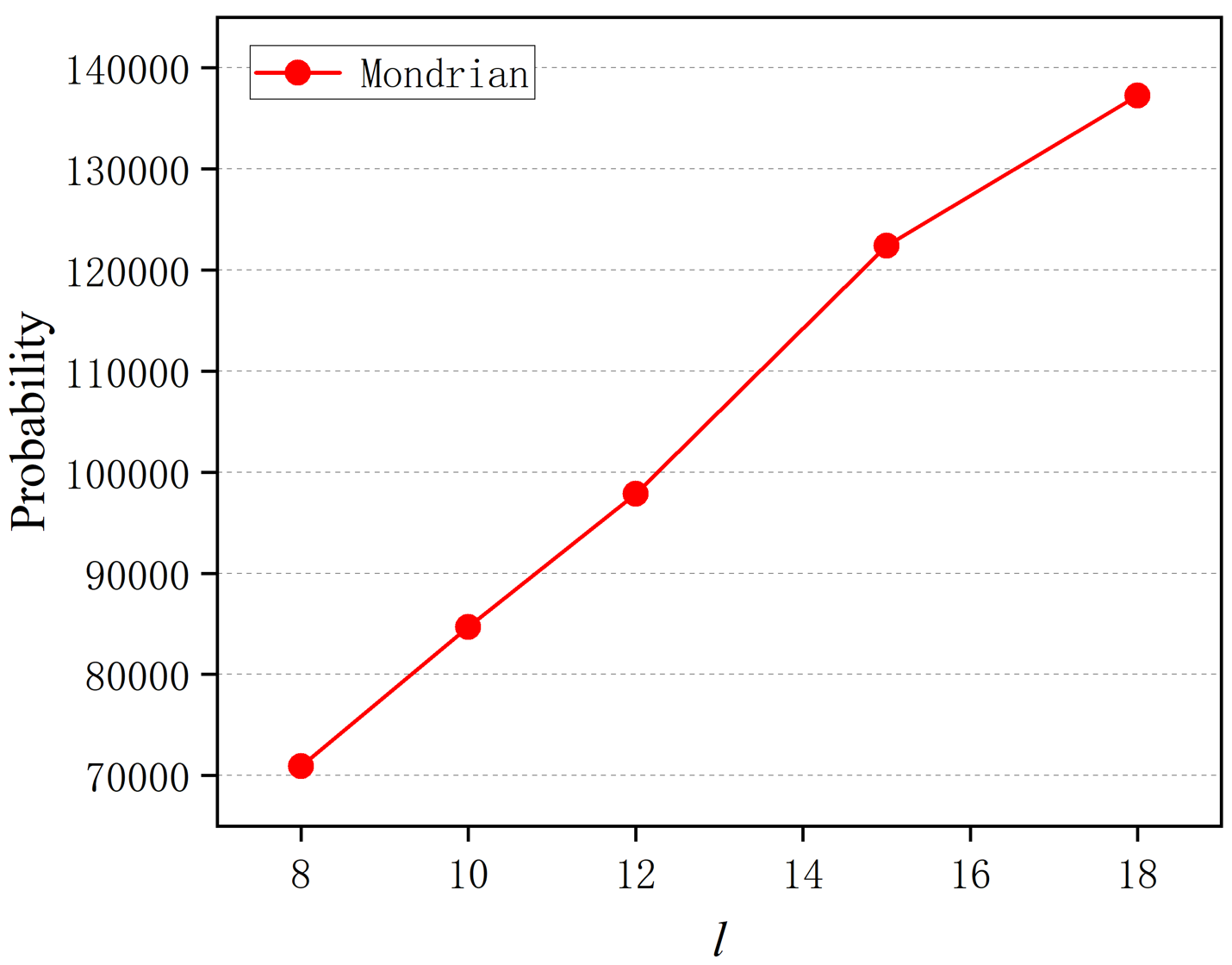}
			\vspace{-0.5em}
			\label{fig_pen_mon}
	\end{minipage}}
	\vspace{-0.5em}
	\caption{Information loss.}
	\label{fig_pen}
\end{figure}

Observing from Figure \ref{fig_pen_mutual}, the information loss of MuCo increases with the decrease of parameter $\delta$. According to Corollary \ref{cor_kn}, each QI value in the released table corresponds to more records with the reduction of $\delta$, causing that more records have to be involved for covering on the QI values of long distance. Therefore, the decrease of $\delta$ enhances the protection but also increases the information loss. In addition, comparing to Figure \ref{fig_pen_mon}, both the information loss and the interval of MuCo are much less than that of Mondrian. Thus, the experiments illustrate that comparing to generalization, MuCo preserves more information utility and enhances the protection at a much smaller cost of information loss.

\subsection{Query Answering}
\label{sec_query}
In this experiment, we use the approach of aggregate query answering \cite{zhang2007aggregate} to check the information utility of MuCo. We randomly generate 1,000 queries and calculate the average relative error rate for comparison. The sequence of the query is expressed in the following form
\begin{flushleft}
	%\footnotesize
	SELECT SUM(salary) FROM \bf{Microdata}\rm\\
	WHERE \bf{$pred$($A_1^{QI}$)} AND \bf{$pred$($A_2^{QI}$)} AND \bf{$pred$($A_3^{QI}$)} AND \bf{$pred$($A_4^{QI}$)}.
\end{flushleft}
Specifically, the query condition contains four random QI attributes, and the sum of salary is the result. We use the same parameters of MuCo and perform Mondrian and Anatomy complying with $l$-diversity for comparison. Since the generated query conditions are strong stochastic, we report the average values and the variances of relative error rates as given in Figure \ref{fig_query} and Figure \ref{fig_var}, respectively.

\begin{figure*}[!t]
	\centering
	\subfigure[MuCo]{\includegraphics[height=1.4in]{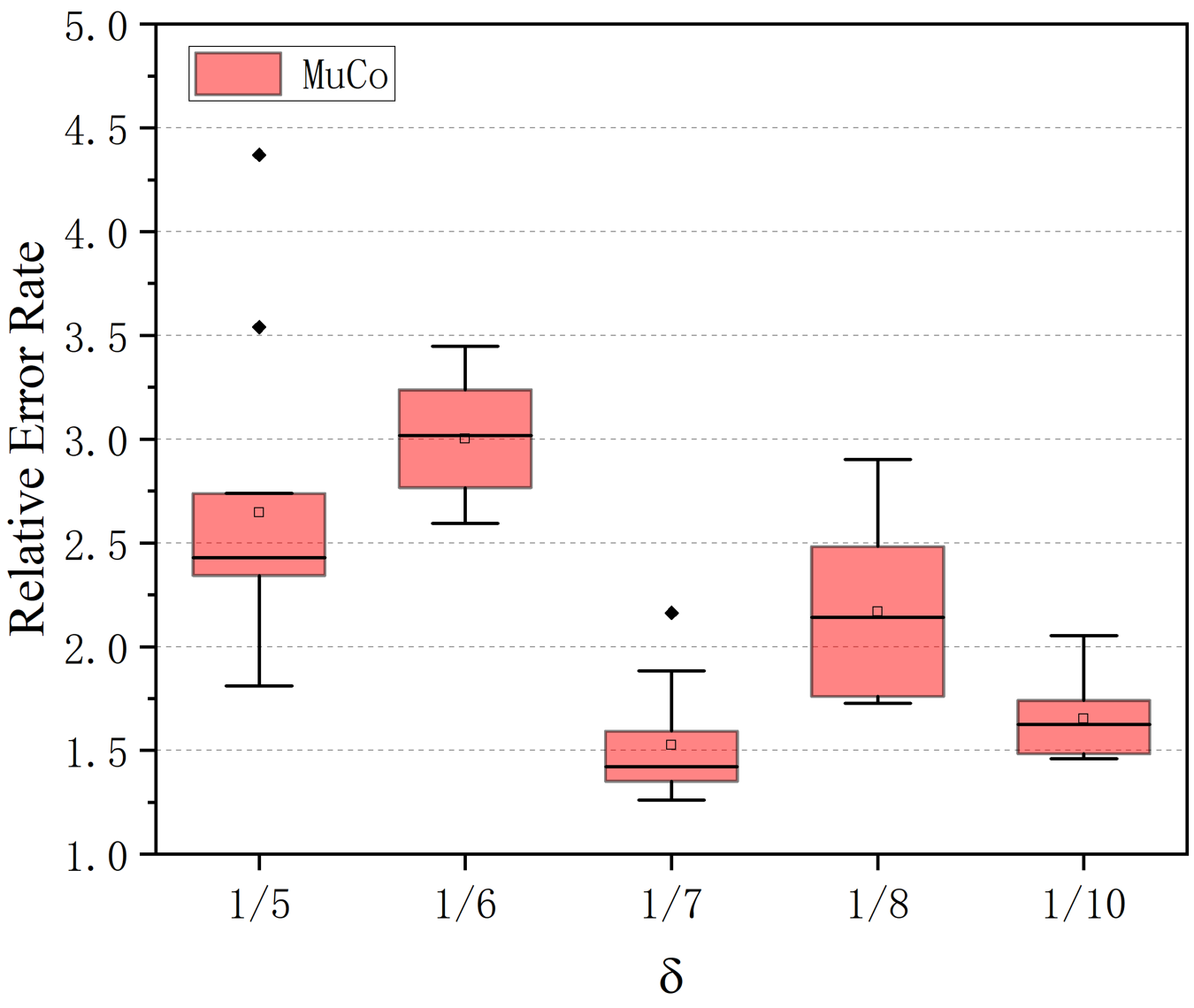}
		\label{fig_query_mutual}}
	\subfigure[Mondrian]{\includegraphics[height=1.4in]{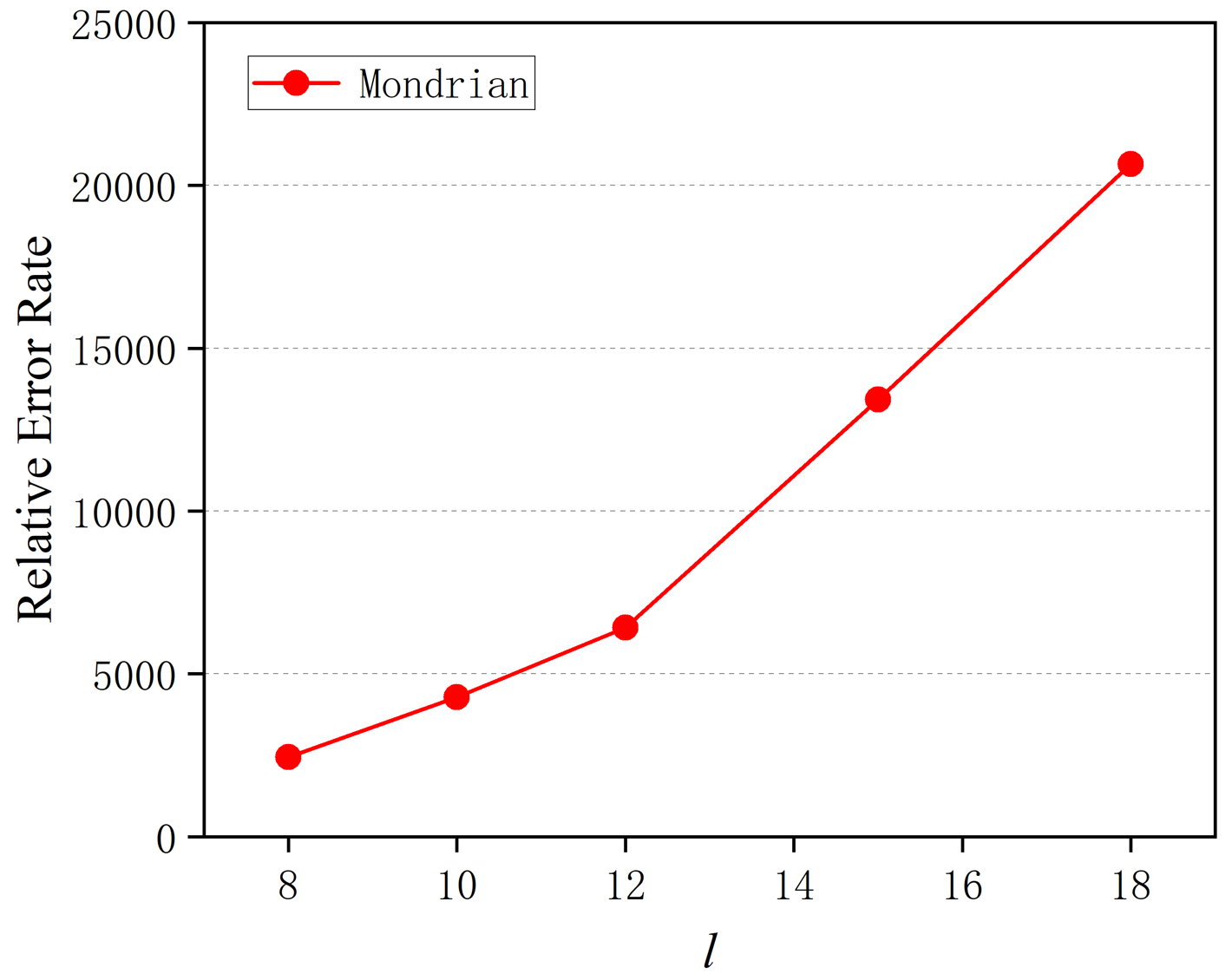}
		\label{fig_query_mondrian}}
	\subfigure[Anatomy]{\includegraphics[height=1.4in]{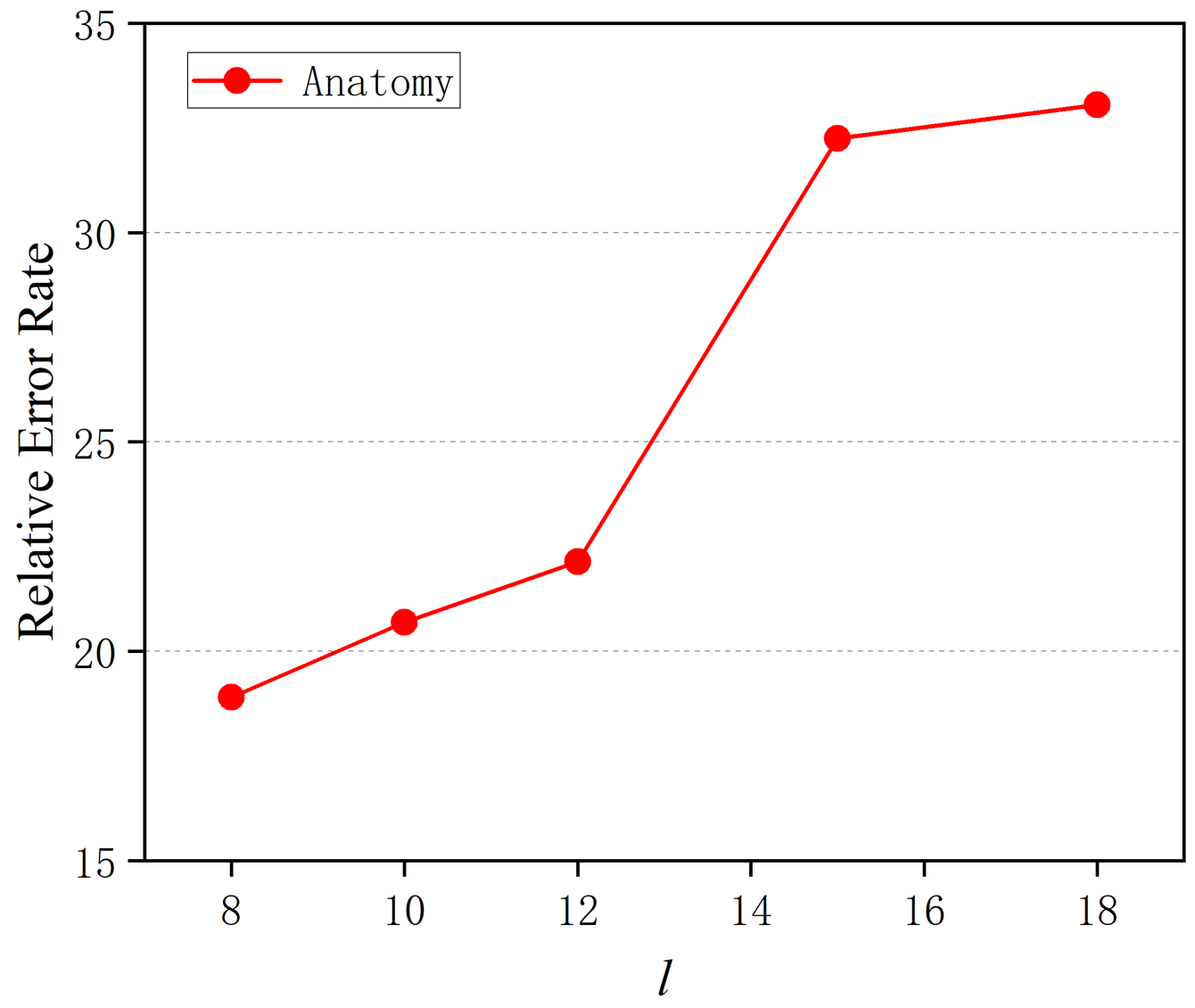}
		\label{fig_query_anatomy}}
	\caption{Average values of relative error rates.}
	\label{fig_query}
\end{figure*}

\begin{figure*}[!t]
	\centering
	\subfigure[MuCo]{\includegraphics[height=1.4in]{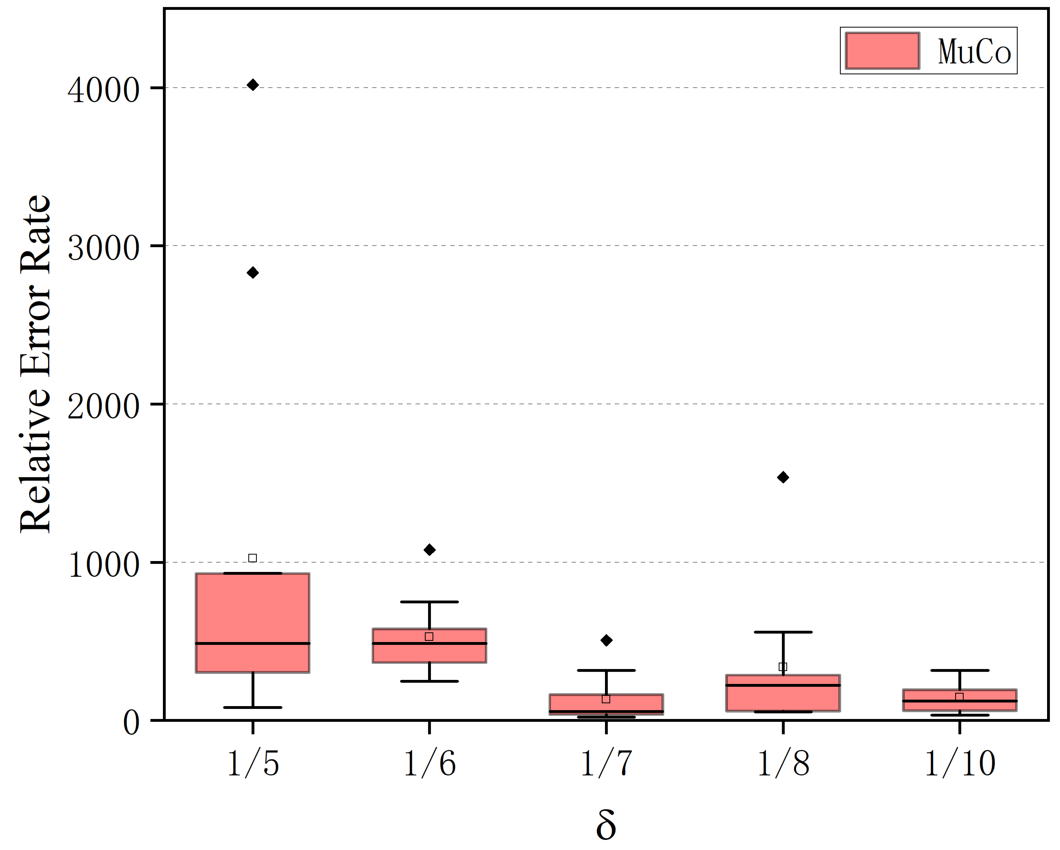}
		\label{fig_var_mutual}}
	\subfigure[Mondrian]{\includegraphics[height=1.4in]{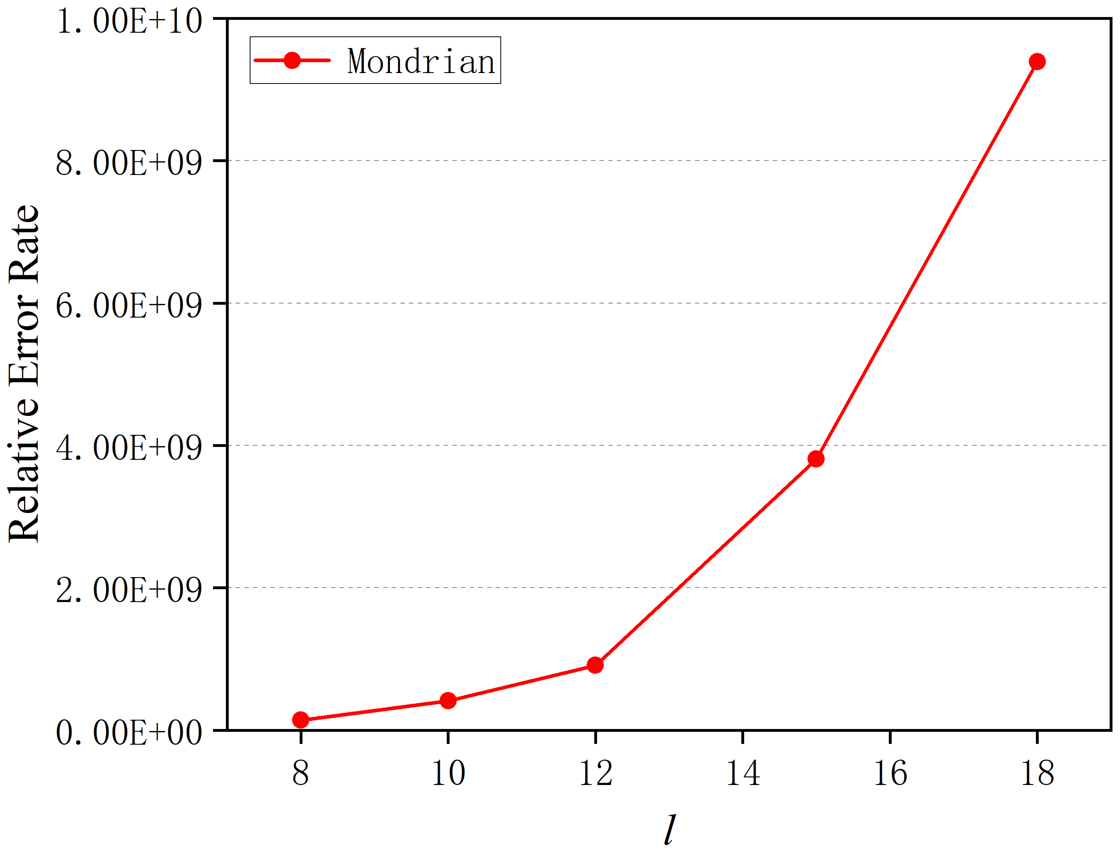}
		\label{fig_var_mondrian}}
	\subfigure[Anatomy]{\includegraphics[height=1.4in]{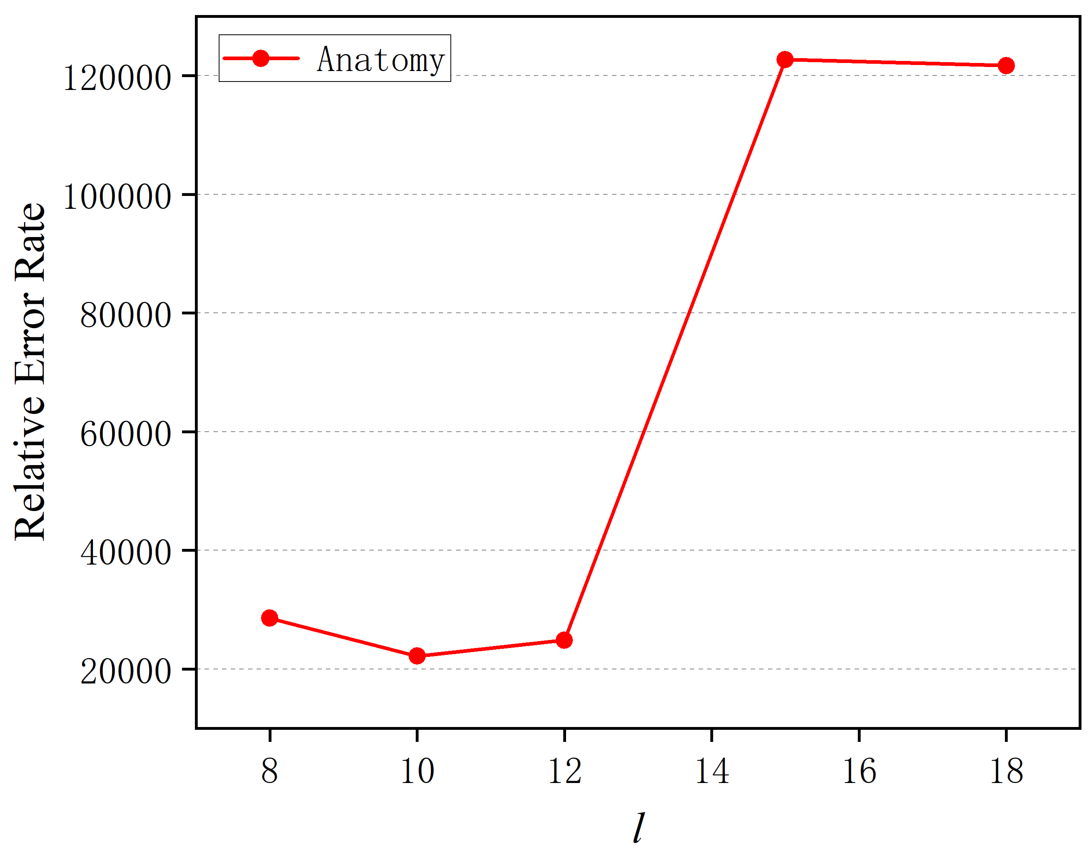}
		\label{fig_var_anatomy}}
	\caption{Variances of relative error rates.}
	\label{fig_var}
\end{figure*}

\begin{figure}[!t]
	\centering
	\subfigure[Information loss]{
		\begin{minipage}[t]{0.45\linewidth}
			\centering
			\includegraphics[height=1.2in]{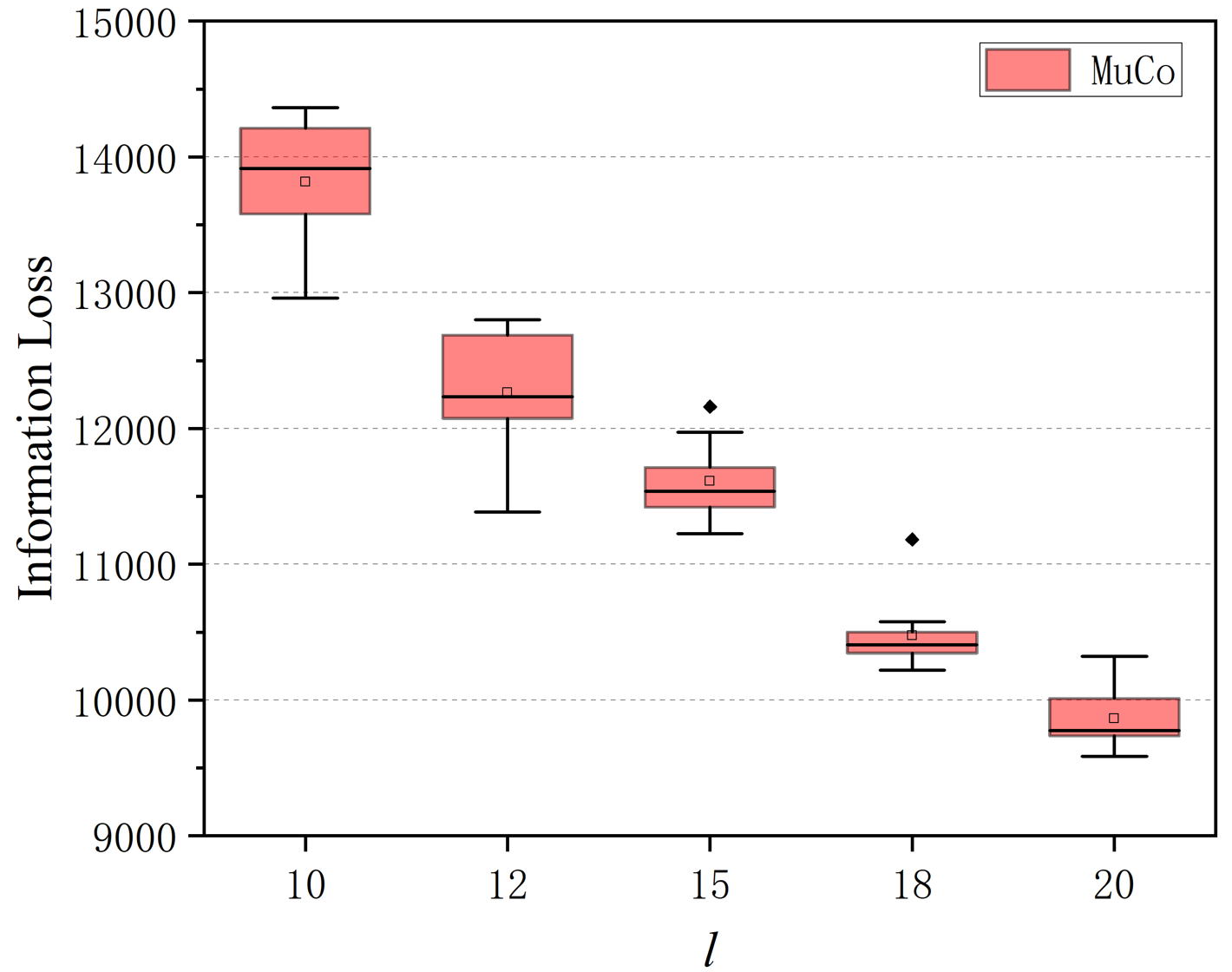}
			\label{fig_eff_loss}
	\end{minipage}}
	\subfigure[Relative error rate]{
		\begin{minipage}[t]{0.45\linewidth}
			\centering
			\includegraphics[height=1.2in]{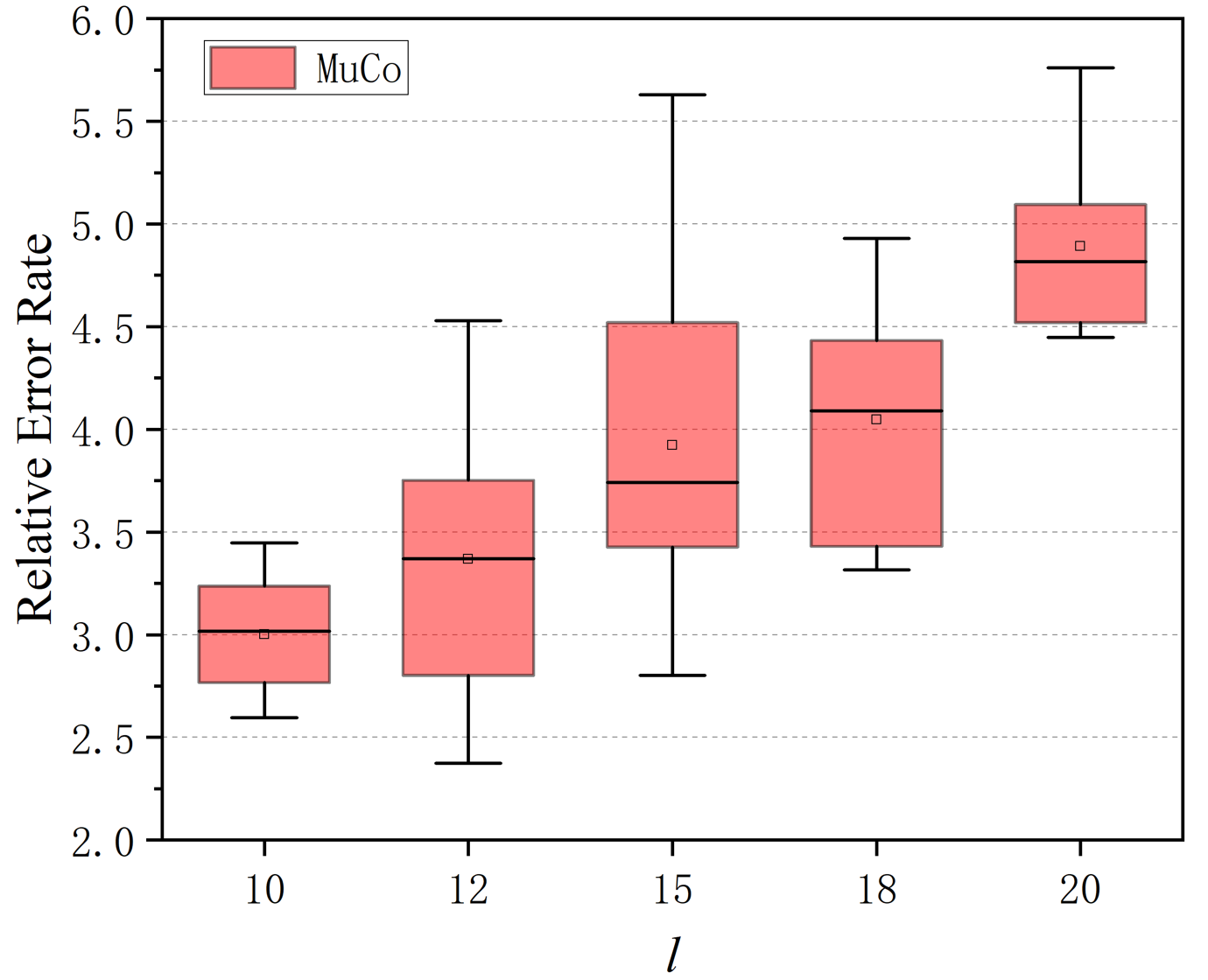}
			\label{fig_eff_query}
	\end{minipage}}
	\caption{Influence of attribute protection.}
	\label{fig_eff}
\end{figure}

We observe that the results of MuCo are much better than that of Mondrian and Anatomy. The primary reason is that MuCo retains the most distributions of the original QI values and the results of queries are specific records rather than groups. Consequently, the accuracy of query answering of MuCo is much better and more stable than that of Mondrian and Anatomy. Besides, since the results of queries for MuCo are specific records rather than groups, the relative error rate of MuCo does not increase steadily with the growth of $\delta$ but fluctuates depending on specific query conditions. Therefore, differing from Mondrian and Anatomy, increasing the level of protection of MuCo has little influence on the query results. In conclusion, MuCo can achieve the same level of protection as generalization does but with less information loss and more accurate query results. Note that, since we use the sum of salary for comparison (the range of salary is from 4 to 718,000), the relative error rates of Mondrian are much larger than some existing works.

\subsection{Cost of Attribute Protection}
\label{sec_effect}
In this experiment, we evaluate the influence of attribute protection on information loss and query answering. We set $l$ to 10, 12, 15, 18, and 20, respectively, assign $\delta$ to $\frac{1}{6}$, and use the same configuration as previous experiments. The results are shown in Figure \ref{fig_eff}.

Results from Figure \ref{fig_eff} show that the increase of $l$ lowers the information loss but raises the relative error rate. It is mainly because the number of tuples in each group increases with the growth of $l$. On the one hand, in random output tables, the probabilities that tuples have to cover on the QI values of long distance reduce significantly, and at the same time, the range of random output values (i.e., the column values) also becomes larger. Besides, the influence of $l$ reduce little information utility, such that MuCo avoids the problem of over-protection in generalization.

\section{Conclusion}
\label{sec_conclu}
In this work, we propose a novel technique, called the Mutual Cover (MuCo), to protect the privacy for microdata publication. The rationale is to make similar records to cover for each other at the minimal cost by perturbing the original QI values according to the random output tables. In this way, MuCo can achieve great protection performance, and the anonymization process is hidden for the adversary. Furthermore, MuCo preserves more information utility than generalization because the distributions of the original QI values are preserved as much as possible and the results of query statement are specific matching tuples rather than groups. Additionally, MuCo avoids the problem of over-protection for identities. The experiments illustrate that MuCo provides impressive privacy protection, little information loss, and accurate query answering.

\bibliographystyle{cas-model2-names}
\bibliography{references}

\end{document}